 \newtheorem{theorem}{Theorem}
 \newtheorem{lemma}[theorem]{Lemma}
\definecolor{henrik}{rgb}{.8,.3,0}
\newcommand{\mc}[1]{\mathcal{#1}}
\newcommand{\mb}[1]{\mathbb{#1}}
\newcommand{\e}{\mathrm{e}}
\newcommand{\rmd}{\mathrm{d}}
\newcommand{\tr}{\mathrm{Tr}} 
\newcommand{\Tr}{\mathrm{Tr}} 
\newcommand{\id}{\mathbbm{1}}
\newcommand{\mcA}{\mc{A}}
\newcommand{\ZZ}{\mb{Z}}
\newcommand{\norm}[1]{\left\Vert #1 \right\Vert}
\newcommand{\ket}[1]{\left.\left|{#1}\right.\right\rangle}
\newcommand{\bra}[1]{\left.\left\langle{#1}\right.\right|}
\newcommand{\ketbra}[2]{\ket{#1} \!\! \bra{#2}}
  \newcommand{\proj}[1]{\ketbra{#1}{#1}}
\newcommand{\eps}{\varepsilon}
\newcommand{\diam}{\mathrm{diam}}
\begin{document}
\title{Lieb-Robinson bounds imply locality of interactions}
\author{Henrik Wilming}
\affiliation{Institute  for  Theoretical  Physics,  ETH  Zurich,  8093  Zurich,  Switzerland}
\author{Albert H. Werner}
\affiliation{QMATH, Department of Mathematical Sciences, University of Copenhagen,
Universitetsparken 5, 2100 Copenhagen, Denmark}
\affiliation{NBIA, Niels Bohr Institute, University of Copenhagen, Blegdamsvej 17, 2100 Copenhagen, Denmark}

\begin{abstract}
Discrete lattice models are a cornerstone of quantum many-body physics. They arise as effective descriptions of condensed matter systems and lattice-regularized quantum field theories. Lieb-Robinson bounds imply that if the degrees of freedom at each lattice site only interact locally with each other, correlations can only propagate with a finite group velocity through the lattice, similarly to a light cone in relativistic systems.
Here we show that Lieb-Robinson bounds are equivalent to the locality of the interactions: a system with $k$-body interactions fulfills Lieb-Robinson bounds in exponential form if and only if the underlying interactions decay exponentially in space. In particular, 
our result already follows from the behaviour of two-point correlation functions for single-site observables and generalizes to different decay behaviours as well as fermionic lattice models. As a side-result, we thus find that Lieb-Robinson bounds for single-site observables imply Lieb-Robinson bounds for bounded observables with arbitrary support.
\end{abstract}
\maketitle
A crucial feature of the dynamics of a quantum many-body system with local Hamiltonian is the finite group-velocity $v$ for the spread of correlations first shown by Lieb and Robinson \cite{Lieb1972}.
If $A$ and $B$ are local observables at locations $x$ and $y$ of a lattice, this finite group-velocity implies that two-point correlation functions of the form
\begin{align}
	C_{x,y}(t_1,t_0)= i\!\bra{\Psi}\![A(t_1),B(t_0)]\!\ket{\Psi} \label{eq:two-point}
\end{align}
are exponentially small if $(x, t_1)$ lies outside the effective "light-cone" emanating from $(y,t_0)$ with the \emph{Lieb-Robinson velocity} $v$, see Fig.~\ref{fig:LRbounds}.
Such behaviour is reminiscent of the strict light cone in relativistic field theories and holds for any lattice Hamiltonian with bounded $k$-body interactions that decay exponentially in space.
It is the basis for a host of important results in condensed matter and quantum many-body physics.
To name just a few, these range from the exponential clustering theorem for ground states of gapped Hamiltonians \cite{Hastings2004,Hastings2004a,Hastings2006,Nachtergaele2006}, to the quantization of Hall-conductance \cite{Hastings2014,Bachmann2018}, Lieb-Schultz-Mattis theorems \cite{Hastings2004,Nachtergaele2007a}, Goldstone's theorem \cite{Wreszinski1976,Wreszinski1987}, and the Area Law in one spatial dimension \cite{Hastings2007}.
Besides these theoretical results, the finite group velocity is also crucial to be able to simulate the dynamics of complex quantum systems both on a classical as well as on quantum computers, because it ensures that the dynamics can be well approximate by a local quantum circuit. 
Many-body systems can nowadays be simulated in experiments, for example using ultra-cold atoms in optical lattices, and the Lieb-Robinson cone can be detected experimentally \cite{Cheneau2012,Jurcevic2014,Richerme2014}.

In relativistic field theories, which show a strict light-cone, the interactions are usually required to be strictly local.
A natural question one may therefore ask is whether a sufficiently quick decay of the two-point functions \eqref{eq:two-point} conversely also implies that the dynamics can be generated by a Hamiltonian with local interactions.
If true, such a result would, on the one hand, make the connection with relativistic field theories stronger and hence provide valuable insight into the basic properties of quantum many-body systems.  
On the other hand, it would also have practical consequences for system identification and process tomography, which aims at reconstructing the Hamiltonian of a system from measurements of few  observables.  
This constitutes an important task both from an experimental point of view, but also in order to certify the functioning of quantum information processing devices.
In fact, it was shown recently that the Hamiltonian may be reconstructed efficiently using local measurements provided one has the promise that the interactions are indeed local \cite{Bairey2019,Li2020,Bairey2020,Anshu2020}.

In the following, we show such a connection between correlations functions and the locality of the Hamiltonian:
 If the two-point functions $C_{x,y}(t_1,t_0)$ decay at least as quickly as promised by state of the art Lieb-Robinson bounds, then the interactions decay exponentially in space. More generally, our result shows a tight connection between the decay-behaviour of $C_{x,y}(t_1,t_0)$ for short times and the locality of the interactions.

Importantly, to reach our conclusion, we only require knowledge about the decay of two-point functions of observables supported on single lattice-sites.
In practice, these are the relevant quantities in condensed-matter systems as scattering experiments effectively measure two-point correlation functions.
However, exponentially decaying interactions in turn imply Lieb-Robinson bounds for all observables. Hence, as a by-product, we thus prove that Lieb-Robinson bounds for on-site observables already imply Lieb-Robinson bounds for bounded observables with arbitrary support without the need for a support-size dependent correction (as a direct expansion in terms of a product-operator-basis would require; see Appendix~\ref{app:LRbasis}). 

{\bfseries Set-up and Lieb-Robinson bounds.} Before we state our result, we set up some notation and review the statement of Lieb-Robinson bounds. We consider a lattice $\ZZ^D$ and associate to each point $x\in\ZZ^D$ a $d$-dimensional Hilbert-space $\mc H_x$. We choose the hypercubic lattice $\ZZ^D$ purely for simplicity and definiteness -- what follows is true for any regular lattice and with slight modifications also on more general graphs.
Denote by $\mc A_X$ the set (algebra) of local observables $\mc A_X \simeq \mc B(\mc H_X)$ supported in a region $X$. Here, $\mc H_X = \otimes_{x\in X}\mc H_x$ denotes the Hilbert-space associated to $X$.
If $Y\subseteq X$, then $\mc A_Y$ is naturally embedded in $\mc A_X$ by tensoring with identities on the complement of $Y$ in $X$.
We will mostly omit such identities throughout the paper.
Conversely, we can consider the reduction map from $\mc A_\Lambda$ to $\mc A_X$ with $X\subseteq \Lambda$,  which for any $A\in\mc A_X$ and $B\in \mc A_{X^c}$ with $X^c = \Lambda\setminus X$ is given by
\begin{align}
	\Gamma^\Lambda_X[A B] = \Tr\left[B \mathbbm I_{X^c}\right]\, A,
\end{align}
where, as usually, we identify $B\in \mc A_{X^c}$ with $B\otimes \id_X\in \mc A_\Lambda$ and $\mathbbm I_{X^c}:= \id_{X^c}/d_{X^c}$ corresponds to the maximally mixed state on $X^c$.
In the case $X=\emptyset$, we simply obtain $\Gamma^\Lambda_\emptyset[B] = \id \Tr[B \mathbbm I_{X^c}]$.

It is customary to specify the dynamics of many-body systems through the notion of an \emph{(interaction) potential} $\Phi$.
This is a function associating to each finite subset $X\subset \ZZ^D$ of lattice sites a bounded operator $\Phi(X)=\Phi(X)^\dagger \in\mc A_X$.
For a large, but finite subset $\Lambda\subset \ZZ^D$, we then write
\begin{align}
	H_\Lambda := \sum_{X\subseteq \Lambda}\Phi(X)
\end{align}
for the Hamiltonian on $\Lambda$ induced by the potential. By $\tau_t^\Lambda(\cdot):= \e^{\mathrm i H_\Lambda t}\,\cdot\,\e^{-\mathrm i H_\Lambda t}$ we denote the unitary propagator generated by $H_\Lambda$, which gives rise to the Heisenberg-dynamics of observables in $\mc A_\Lambda$.

The potentials appearing in actual models usually have the property that they only couple at most $k$ spins directly, where $k$ is some finite number independent of the system-size. The prototypical examples are spin-spin-interactions of the Ising-or Heisenberg-type or Coulomb interactions for charged particles instead of spins. In both cases we have $k=2$, but larger values of $k$ appear, for example, as plaquette operators in lattice gauge theories, in the context of topological order, e.g. the toric-code, or as effective interactions in perturbation theory \cite{Kitaev2003,Kempe2004,kitaev2006anyons,Jordan2008}.
We hence call a potential \emph{$k$-body} if
\begin{align}
	\Phi(X) = 0,\quad \text{if}\quad |X|>k,
\end{align}
where $|X|$ denotes the number of lattice sites in $X$.
We emphasize that this condition does not put any restrictions on the range or locality of the interaction.
Even with $k=2$, we, a priori, still allow for arbitrarily long-ranged interactions.
Finally, we call a potential \emph{exponentially decaying} if there exist positive constants $K$ and $a$ such that
\begin{align}
	\norm{\Phi(X)}\leq K \exp(- a\mathrm{diam}(X))
\end{align}
for any region $X\subset \ZZ^D$. Here, $\mathrm{diam}(X)=\sup_{x,y\in X} |x-y|$ denotes the diameter of $X$ with $|x-y|$ the (graph-theoretical) distance on the lattice.

Even though a given potential $\Phi$ gives rise to a fixed time-evolution $\tau^\Lambda_t$ for every $\Lambda$, the converse does not hold true:
Firstly, the Hamiltonian which generates the time-evolution $\tau^\Lambda_t$ is only fixed up to a shift of all the energies. Secondly, a given Hamiltonian can in general be decomposed in many different ways into a potential.
The only physically relevant object in the end is the time-evolution operator $\tau^\Lambda_t$.

Accordingly, the question we are addressing in this paper, namely whether the interaction is exponentially decaying or not, corresponds to asking whether there exists \emph{some} exponentially decaying potential giving rise to the same time-evolution.
Our result shows that given \emph{some} $k$-body potential for which the two-point functions $C_{x,y}(t_1,t_0)$ of the corresponding time-evolution show a certain decay-behaviour in space, then there also exists a $k$-body potential on $\Lambda$ generating exactly the same time-evolution, which has the same decay-behaviour in space.
\begin{figure}[t!]
	\includegraphics[width=7cm]{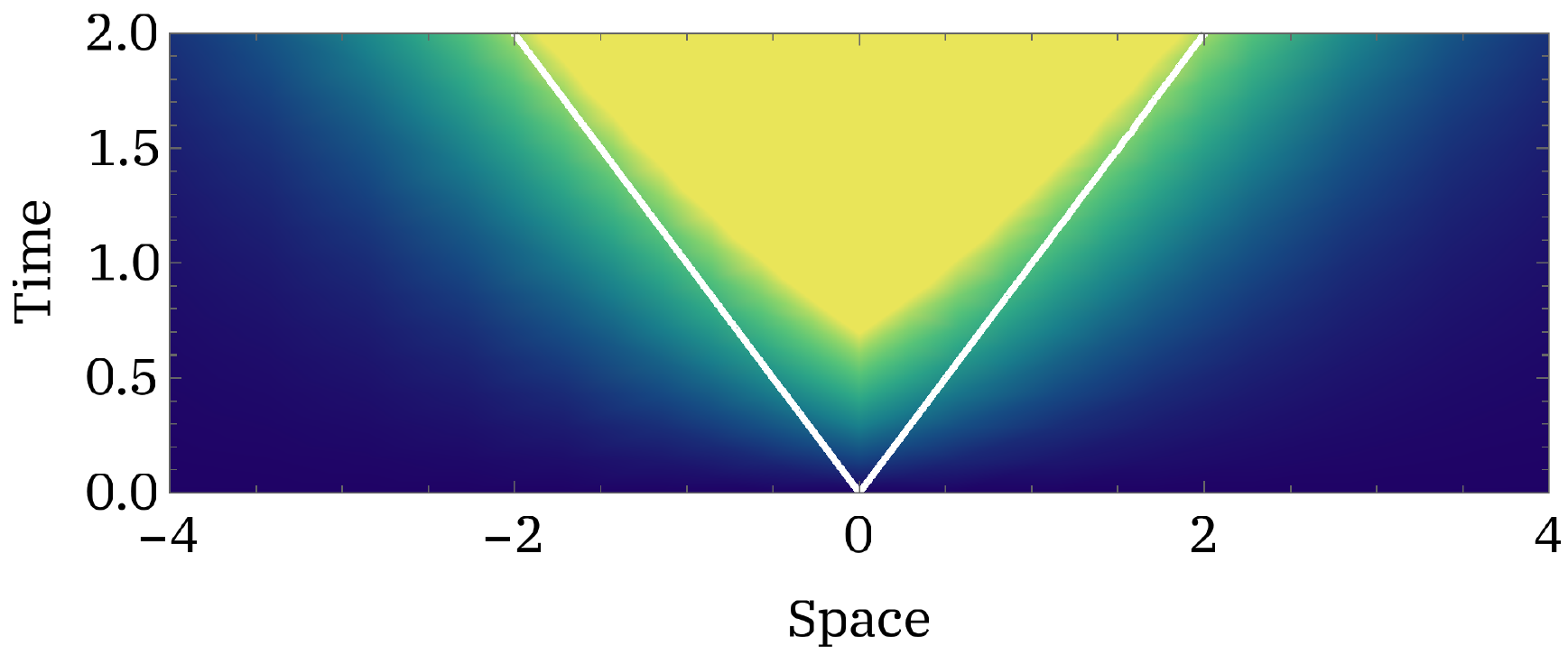}
	\caption{Color: Density-plot of  Lieb-Robinson cone in the form given by Theorem~\ref{thm:LRbounds}. White: Contour of Lieb-Robinson cone as given by the standard exponential form of the Lieb-Robinson bounds. For large times the two agree, but for short times we see a significantly different behaviour.}
	\label{fig:LRbounds}
\end{figure}
Importantly, the following theorem from Ref.~\cite{Nachtergaele2007}, which is an improved version of the original bound from Ref.~\cite{Lieb1972}, states that any exponentially decaying $k$-body potential conversely gives rise to suitably bounded two-point functions. We state the theorem here in a form adjusted to our set-up.
\begin{theorem}[Lieb-Robinson bounds]\label{thm:LRbounds} Given an exponentially decaying $k$-body potential, there exist constants $\mu,v,K > 0$ such that for any $\Lambda,X,Y\subset \ZZ^D$, we have
	\begin{align}
		C^\Lambda_{X,Y}(t)&:=\max_{A,B}\frac{\norm{[\tau^\Lambda_t(A),B]}}{\norm{A}\norm{B}} \leq  2\mathrm{min}\{1,g(t)  f(X,Y)\}, \label{eq:LRbounds}
	\end{align}
	where the maximization is over operators $A\in \mc A_X$ and $B\in \mc A_Y$ supported in $X$ and $Y$, respectively, and $f(X,Y)=\min\{|X|,|Y|\} K \exp(-\mu d(X,Y))$. The function $g(t)$ is given by
	\begin{align}
	g(t) = \begin{cases}
		\exp(\mu v\, |t|) -1\quad &\text{if}\quad d(X,Y)>0\\
		\exp(\mu v\,|t|) \quad &\text{if}\quad d(X,Y)=0.
		\end{cases}
	\end{align}
\end{theorem}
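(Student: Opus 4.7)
I would follow the iterative Duhamel strategy that is standard for Lieb-Robinson bounds, tracking constants to extract the particular form of $g(t)$. Writing $f(t) := [\tau_t^\Lambda(A), B]$ and differentiating, only the potential terms that act nontrivially on $X=\supp A$ survive, giving
\begin{align}
\ddt f(t) = \ii \sum_{Z:\,Z\cap X\neq\emptyset} [\tau_t^\Lambda([\Phi(Z),A]), B].
\end{align}
Treating each $\tau_t^\Lambda([\Phi(Z),A])$ as a fresh evolved observable and using the trivial bound $\|[\Phi(Z),A]\|\leq 2\|\Phi(Z)\|\|A\|$, integration from $0$ to $t$ yields the Duhamel integral inequality
\begin{align}
\|[\tau_t^\Lambda(A),B]\| \leq \|[A,B]\| + 2\|A\|\sum_{Z:\,Z\cap X\neq\emptyset}\|\Phi(Z)\|\int_0^{|t|}\|[\tau_s^\Lambda(\hat\Phi(Z)),B]\|\,\rmd s,
\end{align}
with $\hat\Phi(Z):=\Phi(Z)/\|\Phi(Z)\|$.

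Then I would iterate this inequality $n$ times, producing a sum over chains $X\to Z_1\to\cdots\to Z_n$ with consecutive $Z_i$ intersecting, weighted by $\prod_i\|\Phi(Z_i)\|\cdot(2|t|)^n/n!$. The zeroth-order term $\|[A,B]\|$ vanishes whenever $d(X,Y)>0$, producing the $-1$ appearing in $g(t)$ in the disjoint case. Choosing $\mu$ slightly smaller than the decay rate $a$ of $\Phi$, the $k$-body hypothesis makes the number of subsets of diameter $r$ containing a fixed site grow only polynomially in $r$, and hence ensures a \emph{reproducing} bound of the form $\sum_{Z\ni x,\,y}\|\Phi(Z)\|\,\e^{\mu\,d(x,y)}\leq v$ with some finite $v$ uniform in $x$. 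A standard resummation along chains telescopes this bound into the factor $\e^{\mu v|t|}$ in time and $K\,\e^{-\mu\,d(X,Y)}$ in space. Using $\|[\tau_t^\Lambda(A),B]\|=\|[A,\tau_{-t}^\Lambda(B)]\|$ and running the iteration from $B$ instead gives the corresponding bound with $|X|$ replaced by $|Y|$, yielding the $\min\{|X|,|Y|\}$ prefactor; the outer $\min\{1,\cdot\}$ simply records the trivial bound $\|[\tau_t^\Lambda(A),B]\|\leq 2\|A\|\|B\|$.

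The main difficulty is the combinatorial/geometric bookkeeping underlying the reproducing lemma: I need to absorb the polynomial growth in the number of chains of fixed geometric length into the exponential decay of $\|\Phi(Z)\|$ while still preserving a clean spatial exponential $\e^{-\mu\,d(X,Y)}$, rather than something weaker involving the sum of chain diameters. This is precisely where the $k$-body assumption becomes indispensable: without it, the number of subsets $Z$ of diameter $r$ containing a given site could grow super-polynomially and the resummation would fail. Once the reproducing inequality is in place, summing the geometric series in $n$ is routine, and the constants $\mu, v, K$ can be written down explicitly in terms of $a$, the prefactor in the decay of $\Phi$, and $k$.
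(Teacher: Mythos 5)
The paper does not actually prove Theorem~\ref{thm:LRbounds}; it is quoted as a known result from Nachtergaele and Sims (Ref.~\cite{Nachtergaele2007}), adjusted to the present notation. There is therefore no ``paper proof'' to compare against, and your sketch should be judged on its own merits. The overall strategy you outline --- Duhamel iteration, summation over chains of overlapping sets, a reproducing-kernel resummation giving $\e^{\mu v |t|}$ and $K\e^{-\mu d(X,Y)}$, the vanishing of the zeroth-order term when $d(X,Y)>0$ producing the $-1$, and the symmetry argument for the $\min\{|X|,|Y|\}$ prefactor --- is indeed the standard route to precisely this form of the bound.

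There is one genuine gap in the way you set up the Duhamel step. Differentiating, you correctly arrive at $\ddt f(t) = \ii \sum_{Z\cap X\neq\emptyset}[\tau_t^\Lambda([\Phi(Z),A]),B]$, but the operator $[\Phi(Z),A]$ is supported on $X\cup Z$, not on $Z$. Hence the inequality you then write down, with $\|[\tau_s^\Lambda(\hat\Phi(Z)),B]\|$ on the right-hand side, does not follow from $\|[\Phi(Z),A]\|\leq 2\|\Phi(Z)\|\|A\|$ alone --- that estimate would put $\sup_{O\in\mc A_{X\cup Z}}\|[\tau_s(O),B]\|/\|O\|$ on the right, and the region $X\cup Z$ would keep growing under iteration. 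The standard way to close the recursion with $\mc A_Z$ only is the interaction-picture trick of Nachtergaele--Sims: first conjugate by the dynamics generated by the part of $H_\Lambda$ supported strictly inside $X$, which leaves $\|[\tau_t(A),B]\|$ unchanged but kills the ``internal'' terms, so that only the boundary terms $\Phi(Z)$, not the commutators $[\Phi(Z),A]$, appear in the derivative. You should make this step explicit; otherwise the iteration does not yield the clean reproducing structure you invoke afterwards. Once that is fixed, and the reproducing condition is stated as a convolution bound $\sum_z F(d(x,z))F(d(z,y))\leq C F(d(x,y))$ for $F(r)=\e^{-\mu r}(1+r)^{-(D+1)}$ or similar (rather than the single-sum form you wrote), the rest of your argument goes through, and your observation that the $k$-body hypothesis is what tames the combinatorics of subsets of a given diameter is exactly right.
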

Using the fact that the operator norm is invariant under unitary transformations and
$[A(t_1),B(t_0)]$ = $\tau^\Lambda_{t_0}([\tau^\Lambda_{t_1-t_0}(A),B])$, Lieb-Robinson bounds imply that the two-point correlation functions in \eqref{eq:two-point} satisfy
\begin{align}
	|\!\bra{\Psi}\![A(t_1),B(t_0)]\!\ket{\Psi}\!|\leq \min\big\{2,K (\e^{\mu v |t_1-t_0|}-1) \e^{-\mu |x-y|}\big\}\nonumber
\end{align}
for any state $\ket \Psi$ and observables with $\norm{A}=\norm{B}=1$.
The bound is plotted in Figure~\ref{fig:LRbounds} and compared with the conventional bound $\exp(\mu(v|t_1-t_0| - |x-y|))$.
For most purposes this latter bound is good enough, since it agrees for large times with the one we use here.
However, for short times, the two bounds differ markedly. In particular, the above version ensures that $C^\Lambda_{X,Y}(0) = 0$ for $d(X,Y)>0$ and this will be crucial for deriving our result.
The functions $g(t)$ and $f(X,Y)$ specify how sharp the Lieb-Robinson cone is localized.
If the potential decays slower than exponentially, for example as a power-law, then they have to be replaced by functions that decay slower than exponentially.
Indeed, recently a fair amount of literature is devoted to studying Lieb-Robinson bounds for long-ranged interactions, see, e.g., Refs.~\cite{Hastings2006,Eisert2013,Hauke2013,Gong2014,FossFeig2015,Matsuta2016,Kuwahara2019a,Chen2019,Else2020,Tran2020}. 
Furthermore, let us mention that the commutator form of the Lieb-Robinson bound given in \eqref{eq:LRbounds} can be reformulated in two ways. Firstly, it gives a bound on the detectability of local excitations at distant lattice sites. Secondly, it provides a certificate on how well the evolution of a local observable can be approximated by an evolution under a truncated Hamiltonian. 
In particular, the first formulation can be interpreted as a bound on the information that can be communicated from one part of the lattice to another part using the given time-evolution: it implies that if one tries to send information with non-vanishing capacity using a local encoding of the information at one point of the lattice, then there is a minimal time one has to wait before the information can be decoded at a different part of a lattice.

{\bfseries Main result.} Our main result is a direct converse to the Lieb-Robinson bounds stated in Theorem~\ref{thm:LRbounds}. Importantly, however, it only depends on the behaviour of two-point functions for observables located at single lattice-sites.
\begin{theorem}[Converse to Lieb-Robinson bounds]\label{thm:converse}
	Consider a $k$-body potential and assume that there exists a function $h(t,x)$, with $h(0,x)=0$ for $|x|>0$ and differentiable in $t$, such that for any finite subset $\Lambda\subset \ZZ^D$ and any two points $x\neq y\in \ZZ^D$ we have
	\begin{align}\label{eq:converse} 
		C^\Lambda_{x,y}(t) :=\max_{A,B} \frac{\norm{[\tau^\Lambda_t(A),B]}}{\norm{A}\norm{B}}  \leq h(t,|x-y|),
	\end{align}
	where the maximization is over single-site operators situated at $x$ and $y$: $A\in \mc A_{\{x\}}, B\in \mc A_{\{y\}}$.
	Then there exists a constant $m_k$ only depending on $k$, and a $k$-body potential $\hat \Phi_\Lambda$ on $\Lambda$, giving rise to the same time-evolution $\tau^\Lambda_t$, with
	\begin{align}
		\norm{\hat\Phi_\Lambda(X)} \leq m_k\, \partial_{\!t}\!h\big(0,\mathrm{diam}(X)\big).
	\end{align}
	for all $X\subseteq\Lambda$ with $|X|\geq 2$.
\end{theorem}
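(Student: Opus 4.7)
My plan is to construct $\hat\Phi_\Lambda$ canonically as the Möbius inversion
\begin{align}
\hat\Phi_\Lambda(X) := \sum_{Y \subseteq X}(-1)^{|X|-|Y|}\,\Gamma^\Lambda_Y[H_\Lambda],
\end{align}
which is the unique decomposition $H_\Lambda = \sum_{X\subseteq\Lambda}\hat\Phi_\Lambda(X)$ in which each $\hat\Phi_\Lambda(X)\in\mc A_X$ has vanishing partial trace on every proper subsystem $Y\subsetneq X$. A direct Möbius calculation shows that each $\Phi(S)$ contributes to $\hat\Phi_\Lambda(X)$ only when $X\subseteq S$, so the $k$-body hypothesis forces $\hat\Phi_\Lambda(X)=0$ once $|X|>k$; thus $\hat\Phi_\Lambda$ is $k$-body and generates the same dynamics $\tau^\Lambda_t$.

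The heart of the argument is the identity, valid for any distinct $x,y\in X$ and single-site $A\in\mc A_{\{x\}},\,B\in\mc A_{\{y\}}$,
\begin{align}
[[\hat\Phi_\Lambda(X),A],B] \;=\; P_X^{\mr{conn}}\bigl[\,[[H_\Lambda,A],B]\,\bigr],
\end{align}
where $P_X^{\mr{conn}} := \sum_{Y\subseteq X}(-1)^{|X|-|Y|}\Gamma^\Lambda_Y$ is the projection onto operators supported on $X$ and traceless on each of its sites. Indeed, expanding $H_\Lambda=\sum_Z\hat\Phi_\Lambda(Z)$, each $[[\hat\Phi_\Lambda(Z),A],B]$ is supported on $Z$ and stays traceless on every site of $Z$ (single-site commutators preserve tracelessness on their own site and leave the others untouched), so $P_X^{\mr{conn}}$ annihilates it unless $Z=X$. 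Differentiating the hypothesis at $t=0$ yields $\|[[H_\Lambda,A],B]\|\le \partial_t h(0,|x-y|)\|A\|\|B\|$, and since each $\Gamma^\Lambda_Y$ is an operator-norm contraction, $\|P_X^{\mr{conn}}\|\le 2^{|X|}\le 2^k$; hence $\|[[\hat\Phi_\Lambda(X),A],B]\|\le 2^{|X|}\partial_t h(0,|x-y|)\|A\|\|B\|$.

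To bound $\|\hat\Phi_\Lambda(X)\|$ itself, I use the depolarizing identity over a unitary operator basis $\{W_i\}_{i=0}^{d^2-1}$ of $\mc A_{\{x\}}$ (e.g.\ Heisenberg--Weyl operators, $\|W_i\|=1$, $\tr(W_i^\dagger W_j)=d\,\delta_{ij}$): $\sum_i[[M,W_i^{(x)}],W_i^{\dagger(x)}]=2d_x^2\,(\id-E_x)[M]$ for any $M$. Since single-site derivations at distinct sites commute as super-operators, applying this identity at both $x,y\in X$ and reordering the resulting four nested commutators gives, for any $M$ traceless on both $x$ and $y$,
\begin{align}
4 d_x^2 d_y^2\,M \;=\; \sum_{i,j}\bigl[\bigl[[[M,W_j^{(y)}],W_i^{(x)}],W_j^{\dagger(y)}\bigr],W_i^{\dagger(x)}\bigr],
\end{align}
where the innermost pair is now a \emph{different-site} double commutator. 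Taking norms and bounding each outer commutator via $\|[X,W_i^\dagger]\|\le 2\|X\|$ yields the inversion bound $\|M\|\le \max_{i,j}\|[[M,W_j^{(y)}],W_i^{(x)}]\|$. Applied to $M=\hat\Phi_\Lambda(X)$ (traceless on each site of $X$ by construction) with $x,y\in X$ realising $|x-y|=\diam(X)$, and combined with the previous bound, this gives $\|\hat\Phi_\Lambda(X)\|\le 2^{|X|}\partial_t h(0,\diam(X))\le 2^k\partial_t h(0,\diam(X))$, so $m_k=2^k$ works.

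The main obstacle is this inversion step: \emph{a priori}, a two-site double commutator seems too weak to recover a $k$-body operator. The construction rescues us by forcing $\hat\Phi_\Lambda(X)$ to be fully traceless on every site of $X$, so the depolarizing averages need only run over the two chosen sites $x,y$; commutation of single-site derivations at distinct sites then allows the nested same-site commutators appearing in the depolarizing identity -- which are uncontrollable under our hypothesis -- to be rewritten as different-site ones that \emph{are} controlled by the assumption.
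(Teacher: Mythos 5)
Your proposal is correct, and it takes a genuinely different route from the paper's after the shared starting point. Both proofs build the same canonical potential via M\"obius inversion and both differentiate the hypothesis at $t=0$ to obtain $\norm{[[H_\Lambda,A],B]}\le\partial_t h(0,|x-y|)\norm{A}\norm{B}$. From there the paper proceeds in two stages: it invokes a unitary-twirl lemma (Lemma~\ref{lemma:reductionspins}) twice to deduce $\norm{(1-\Gamma_{x^c})(1-\Gamma_{y^c})[H_\Lambda]}=\norm{\sum_{Z\ni x,y}\hat\Phi(Z)}\le\eps_{x,y}$, restricts this to $Z$ using contractivity of $\Gamma_Z$, and then runs a reverse-triangle-inequality recursion up the subset lattice to peel off $\hat\Phi(Z)$. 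You instead observe the identity $[[\hat\Phi_\Lambda(X),A],B]=P_X^{\mathrm{conn}}\bigl[[[H_\Lambda,A],B]\bigr]$ -- which isolates the target term in one shot at the cost of $\norm{P_X^{\mathrm{conn}}}\le 2^{|X|}$ -- and then invert the residual two-site double commutator by the depolarizing identity, using that single-site derivations at distinct sites commute as superoperators so the same-site inner commutators can be swapped to different-site ones that the hypothesis controls. I checked the two nontrivial ingredients: $[\hat\Phi_\Lambda(Z),A]$ is indeed still annihilated by $\Gamma_{z^c}$ for every $z\in Z$ (for $z=x$ this follows from cyclicity of the on-site trace, for $z\ne x$ because $A$ factors out of $\Gamma_{z^c}$), and the identity $\sum_{i,j}\bar D_i^{(x)}\bar D_j^{(y)}D_i^{(x)}D_j^{(y)}=\bigl(\sum_i\bar D_i^{(x)}D_i^{(x)}\bigr)\circ\bigl(\sum_j\bar D_j^{(y)}D_j^{(y)}\bigr)=4d^4(\1-\Gamma_{x^c})\circ(\1-\Gamma_{y^c})$ gives $\norm{M}\le\max_{i,j}\norm{[[M,W_j^{(y)}],W_i^{(x)}]}$ with no loss. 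What your route buys: a clean explicit constant $m_k=2^k$ (the paper's recursion yields an $m_k$ growing roughly as $m_m=1+\sum_{j=2}^{m-1}\binom{m-2}{j-2}m_j$), and a transparent "connected projection" picture. What the paper's route buys: no unitary operator basis or Weingarten/depolarizing computation is needed -- only contractivity of $\Gamma$ and the twirl lemma -- and that argument carries over almost verbatim to the fermionic appendix, where the even subalgebra $\mc A_X^+$ is a direct sum rather than a full matrix algebra, so a fermionic analogue of your unitary $1$-design inversion would require extra care.
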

Inserting $h(t,x) = g(t)f(x)$ with $f(x) = K \exp(-\mu x)$ and $g(t)$ from the Lieb-Robinson bounds into the theorem, the result shows directly that Lieb-Robinson bounds for two-point functions imply the exponential decay of the interaction potential $\hat\Phi_\Lambda$:
\begin{align}
	\norm{\hat\Phi_\Lambda(X)} \leq m_k \mu v K \exp\big(-\mu \mathrm{diam}(X)\big).
\end{align}
Moreover, since exponential decay of the interaction implies the Lieb-Robinson bounds in Theorem~\ref{thm:LRbounds}, we find that the decay of two-point functions of single-site observables implies a corresponding decay for arbitrary observables.

More generally, a uniform bound on the decay behaviour of two-point functions directly translates into a decay-behaviour for the potential. For example, if we obtain a power-law decay for the two-point functions, we also establish a power-law behaviour for the potential.

We may also infer strict locality of the Hamiltonian from the function $h$. Suppose, for example, that for $|x-y|>R$, $h(t,|x-y|) = t^{\alpha}f(|x-y|)$ with $\alpha>1$. Then $\partial_t h(0,\mathrm{diam}(X))=0$ if $\mathrm{diam}(X)>R$ and therefore $\hat \Phi_\Lambda(X)$ vanishes.

Note, that we only obtain bounds for potential terms with $|X|\geq 2$. That is, we do not get bounds on the norm of on-site potentials, such as magnetic fields. This is in accordance with the fact that Lieb-Robinson bounds can also be proven for Hamiltonians with unbounded on-site potentials \cite{Nachtergaele2008,Nachtergaele2014}. On the other hand, there are Hamiltonians on bosonic lattices with local but unbounded interactions that allow for signalling at arbitrary speeds \cite{Eisert2009}. This shows that interactions have to be appropriately bounded to obtain Lieb-Robinson bounds for arbitrary observables, in agreement with our result.

Finally, let us consider the possibility of strictly causal behaviour in lattice systems: 
Let us assume that $C^\Lambda_{x,y}(t)$ vanishes identically whenever $v t<|x-y|$ for some $v$. Then our result directly implies that the interaction terms involving distinct lattice sites vanish identically as well. In other words, the only dynamics on a lattice system, which is strictly causal for all times, is one where the spins do not interact at all.

{\bf The proof.}
A central ingredient in our proof is the reduction map $\Gamma^\Lambda_X$, which we use to define $\hat \Phi_\Lambda$ as a certain \emph{canonical form} of a given potential $\Phi$ (see also Refs.~\cite{Araki2004,BratteliRobinson2}):
We say that a potential $\hat\Phi_\Lambda$ on $\Lambda$ is in \emph{canonical form} if
	\begin{align}	
		\Gamma^\Lambda_X[\hat\Phi_\Lambda(Y)] &= 0
	\end{align}
unless $Y\subseteq X$.
Given a Hamiltonian $H_\Lambda$ on $\Lambda$, we can always decompose $H_\Lambda$ into a potential in canonical form. Moreover, if $H_\Lambda$ can be written in terms of a $k$-body potential, then the resulting canonical form will also be $k$-body. These facts are collected in the following Lemma and proven in the Supplementary Material.
\begin{lemma}\label{lemma:canonical}
Consider a $k$-body potential $\Phi$ with associated Hamiltonians $H_\Lambda$. Then for every finite subset $\Lambda\subseteq \ZZ^D$, the potential defined by
\begin{align}
	\hat \Phi_\Lambda(Z) &= \sum_{X\subseteq Z} (-1)^{|Z|-|X|}\Gamma^\Lambda_X[H_\Lambda],
\end{align}
is a $k$-body potential in canonical form on $\Lambda$ giving rise to the Hamiltonian $H_\Lambda$.
\end{lemma}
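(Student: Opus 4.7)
The proof splits into four claims: (i) $\hat\Phi_\Lambda(Z)$ is a Hermitian element of $\mc A_Z$; (ii) $\sum_{Z\subseteq\Lambda}\hat\Phi_\Lambda(Z)=H_\Lambda$; (iii) $\hat\Phi_\Lambda(Z)=0$ whenever $|Z|>k$; (iv) the canonical-form condition $\Gamma^\Lambda_X[\hat\Phi_\Lambda(Y)]=0$ for $Y\not\subseteq X$. The key technical ingredient that I will establish first is the composition rule
\begin{align}
\Gamma^\Lambda_X\circ\Gamma^\Lambda_W=\Gamma^\Lambda_{X\cap W},
\end{align}
which follows from the identification $\Gamma^\Lambda_X[\,\cdot\,]=\tr_{X^c}[\,\cdot\,]/d_{X^c}$ and splitting $X^c=(X^c\cap W)\cup(X^c\cap W^c)$ in the iterated partial trace, with the factors of $1/d_{X^c}$ and $1/d_{W^c}$ collapsing to $1/d_{(X\cap W)^c}$ via $d_{X^c}d_{W^c}=d_{(X\cap W)^c}d_{X^c\cap W^c}$.

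Claim (i) is immediate: each term $\Gamma^\Lambda_X[H_\Lambda]$ is Hermitian (partial trace preserves Hermiticity) and lies in $\mc A_X\subseteq \mc A_Z$ since $X\subseteq Z$. Claim (ii) is Möbius inversion on the Boolean lattice of subsets of $\Lambda$: the definition $\hat\Phi_\Lambda(Z)=\sum_{X\subseteq Z}(-1)^{|Z|-|X|}\Gamma^\Lambda_X[H_\Lambda]$ inverts to $\sum_{W\subseteq Z}\hat\Phi_\Lambda(W)=\Gamma^\Lambda_Z[H_\Lambda]$, and specializing to $Z=\Lambda$ yields $H_\Lambda$ because $\Gamma^\Lambda_\Lambda$ is the identity.

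For Claim (iii), I expand $H_\Lambda=\sum_Y\Phi(Y)$ and interchange sums:
\begin{align}
\hat\Phi_\Lambda(Z)=\sum_{Y\subseteq\Lambda}\sum_{X\subseteq Z}(-1)^{|Z|-|X|}\Gamma^\Lambda_X[\Phi(Y)].
\end{align}
Applying the composition rule, $\Gamma^\Lambda_X[\Phi(Y)]=\Gamma^\Lambda_{X\cap Y}[\Phi(Y)]$ depends on $X$ only through $X\cap Y$; for $X\subseteq Z$ this equals $X\cap(Y\cap Z)$. Parametrising $X=X'\cup R$ with $X'\subseteq Y\cap Z$ and $R\subseteq Z\setminus(Y\cap Z)$ factorises the inner sum, and the sum over $R$ contributes $\sum_{R\subseteq Z\setminus Y}(-1)^{|R|}$, which vanishes unless $Z\subseteq Y$. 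Hence $\hat\Phi_\Lambda(Z)$ is supported on those $Y\supseteq Z$, so $|Y|\geq|Z|>k$ forces $\Phi(Y)=0$ and the whole expression vanishes.

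For Claim (iv), the same composition rule gives $\Gamma^\Lambda_X[\Gamma^\Lambda_W[H_\Lambda]]=\Gamma^\Lambda_{X\cap W}[H_\Lambda]$, so
\begin{align}
\Gamma^\Lambda_X[\hat\Phi_\Lambda(Y)]=\sum_{W\subseteq Y}(-1)^{|Y|-|W|}\,\Gamma^\Lambda_{X\cap W}[H_\Lambda].
\end{align}
If $Y\not\subseteq X$, pick any $y_0\in Y\setminus X$ and pair each $W\subseteq Y\setminus\{y_0\}$ with $W\cup\{y_0\}$; because $y_0\notin X$, both have the same intersection $X\cap W$ with $X$, while the sign $(-1)^{|Y|-|W|}$ flips, producing termwise cancellation. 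The remaining step that is slightly delicate, and the one I expect to require the most care in bookkeeping, is establishing the composition rule cleanly and checking that all the $X$-sums in (iii) and (iv) survive the interchange with the (finite, for fixed $\Lambda$) sum over $Y$; once this is in place the rest is combinatorial Möbius inversion and a sign-pairing argument.
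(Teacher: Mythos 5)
Your proposal is correct and takes essentially the same approach as the paper: both proofs hinge on the inclusion–exclusion identity $\sum_{X\subseteq Z}(-1)^{|X|}=\delta_\emptyset(Z)$ (which you phrase as Möbius inversion / a sign-pairing involution) together with the composition rule $\Gamma^\Lambda_X\circ\Gamma^\Lambda_W=\Gamma^\Lambda_{X\cap W}$, and in both the $k$-body property is obtained by expanding $H_\Lambda=\sum_Y\Phi(Y)$ and factorising the inner alternating sum over subsets of $Z\setminus Y$. The only differences are cosmetic: you derive the composition rule explicitly from the partial-trace form while the paper assumes it, and for the canonical-form condition you use a one-step pairing argument where the paper factorises the double sum and invokes $\delta_\emptyset$ — these are equivalent presentations of the same combinatorics.
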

Given this Lemma, we are now ready to give the essential steps of the proof of Theorem~\ref{thm:converse}. To lighten the notation, we now fix $\Lambda$ and omit it as sub- or superscript from the potential, the reduction map, the propagator and the quantity $C_{x,y}(t)$.
Let us consider two points $x\neq y\in \Lambda$ and operators $A\in \mc A_{\{x\}}$ and $B\in \mc A_{\{y\}}$ with $\norm{A}=\norm{B}=1$.  Since $[A,B]=0$, we then have
\begin{align}
	\lim_{\delta t\rightarrow 0}\frac{\norm{[\tau_{\delta t}(A),B]}}{\delta t}  &= \norm{[[H_\Lambda ,A],B]}.
\end{align}
On the other hand, we can make use of the assumption on $C_{x,y}(t)$ to find
\begin{align}
	\lim_{\delta t\rightarrow 0}\frac{\norm{[\tau_{\delta t}(A),B]}}{\delta t}
	&\leq \lim_{\delta t\rightarrow 0} \frac{C_{x,y}(\delta t)}{\delta t} \\
	&\leq \lim_{\delta t\rightarrow 0} \frac{h(\delta t,|x-y|)}{\delta t}\\
	&= \partial_{\!t}\! h(0,|x-y|),
\end{align}
where we used $h(0,x)=0$ in the last step.
We thus finally get
\begin{align}\label{eq:basic}
	\norm{\left[\left[H_\Lambda,A\right],B\right]}\leq \partial_{\!t}\! h(0,|x-y|).
\end{align}
In the following, let us write $\eps_{x,y} := \partial_{\!t}\! h(0,|x-y|)$ and $x^c := \Lambda\setminus\{x\}$.
Since \eqref{eq:basic} holds for all $B\in \mc A_{\{y\}}$ with $\norm{B}=1$, a Lemma from Ref.~\cite{Bachmann2011}, whose proof we provide in the Supplementary Material (Lemma~\ref{lemma:reductionspins}), implies
\begin{align}\label{eq:firstrestriction}
	\norm{\left[(1-\Gamma_{y^c})[H_\Lambda],A\right]}\leq \eps_{x,y}.
\end{align}
Here, we used $\Gamma_{y^c}[H_\Lambda A]=\Gamma_{y^c}[H_\Lambda]A$ and  $\Gamma_{y^c}[AH_\Lambda]=A\Gamma_{y^c}[H_\Lambda]$, which holds because $A$ is supported within $y^c$. 
Since \eqref{eq:firstrestriction} again holds for all normalized $A\in\mc A_{\{x\}}$, we can use the same argument again to obtain
\begin{align}\label{eq:bound}
	\norm{(1-\Gamma_{x^c})[(1-\Gamma_{y^c})[H_\Lambda]]}\leq \eps_{x,y}.
\end{align}
Now let $\hat\Phi$ be the potential in canonical form representing the Hamiltonian $H_\Lambda$.
We then have $\Gamma_X[H_\Lambda]=H_X$ and $\hat\Phi(\emptyset)\propto \mathbf 1$ (and the identity-operator only appears in this term).
Thus,
using $\Gamma_{x^c}\circ \Gamma_{y^c} = \Gamma_{x^c \cap y^c}$,
\begin{align}
	(1-\Gamma_{x^c})\left[(1-\Gamma_{y^c})[H_\Lambda]\right] &= H_\Lambda - H_{y^c} - H_{x^c} + H_{y^c\cap x^c}\\
	&=\sum_{Z \ni x,y} \hat\Phi(Z).
\end{align}
Therefore, \eqref{eq:bound} says that for any pair of lattice sites $x\neq y$, we have
\begin{align}\label{eq:step2}
	\Big\| \!\! \sum_{Z'\ni x,y}\hat\Phi(Z') \Big\| \leq \eps_{x,y}.
\end{align}
We now fix a set $Z$ and take two points $x,y\in Z$ such that $|x-y|=\mathrm{diam}(Z)$
and make use of the fact that operators can only become smaller in norm under reductions to sub-systems, i.e., for any
$C=C^\dagger \in\mathcal A_\Lambda$ and $X\subseteq \Lambda$ we have
	\begin{align}
		\norm{\Gamma_X[C]}\leq \norm{\Gamma_X[\id]}\norm{C} = \norm{C}, \label{lemma:lowerbound}
	\end{align}
where we used that $\Gamma_X$ is a completely positive, unital map. This property allows us to restrict the bound in \eqref{eq:step2} to potential terms supported only within $Z$:
\begin{align}\label{eq:step3}
	\Big \|\!\!\sum_{\substack{Z'\ni x,y\\Z'\subseteq Z}}\hat\Phi(Z')\Big\| \leq \eps_{x,y}.
\end{align}
Unfortunately, this bound does not yield a bound for each of the terms in the sum by $\eps_{x,y}$ individually. 
This is because we cannot assume that the potential terms are positive operators due to our demand that the potential is in canonical form.  
We therefore now make use of the fact that $|Z|\leq k$ and use the inverse triangle inequality to obtain
\begin{align}\label{eq:step4}
	\Big\|\hat\Phi(Z)\Big\| \leq \eps_{x,y} + \Big\|\!\! \sum_{\substack{Z'\ni x,y\\Z'\subset Z}}\hat\Phi(Z') \Big\|.
\end{align}
Since the sum on the r.h.s is over strict subsets of $Z$, each of the potential terms on the r.h.s. couples at most $|Z|-1$ spins.
Clearly, the above equation gives $\norm{\hat\Phi(Z)}\leq\eps_{x,y}$ for $|Z|=2$, since the only possible set is $Z=\{x,y\}$.
Hence consider $|Z|=3$. Then, the norm on the r.h.s. only contains one set $Z'$, which has cardinality $|Z'|=2$. Thus $\norm{\hat\Phi_\Lambda(Z)}\leq 2\eps_{x,y}$.
This reasoning allows us to recursively bound the potential terms $\Phi(Z)$ with $|Z|=m$ in terms of the those with $|Z|=m-1$.  Since we are assuming the interaction potential to be $k$-body, the recursion stops after a constant number of steps. Hence, there is a finite number $m_k$ only depending on $k$ such that
\begin{align}
	\norm{\hat\Phi_\Lambda(Z)} \leq m_k \eps_{x,y} =  m_k\,\partial_{\!t}\!h\big(0,\diam(Z)\big),
\end{align}
where we inserted the definition of $\eps_{x,y}$ and used that $\diam(Z)=|x-y|$.
This finishes the proof of Theorem~\ref{thm:converse}.

{\bfseries Fermionic lattice systems.} So far, we only considered spin systems. In the Supplementary Material we generalize our result to fermionic lattice systems.
The only difference to the case of spins is that we need to allow the observables $A,B$ in the formulation of the (assumed) bound in Theorem~\ref{thm:converse} to act not only on the respective lattice points, but also on potentially present fermionic auxiliary systems. 
Such auxiliary system would make no difference in the case of spins. 
However, due to the anti-commutation relations of fermions and the parity super-selection rule, they are important to consider in the case of a fermionic lattice system.

{\bfseries Discussion. } In this work we have shown that whenever the two-point functions $C_{x,y}(t_1,t_0)$ show a certain decay-behaviour in space, then there also exists a potential generating exactly the same time-evolution, which has the same decay-behaviour in space.
In particular, this shows that Lieb-Robinson bounds in exponential form are equivalent to the exponential decay of interactions.
Our result implies that whenever a system has \emph{some} long-range interactions, then this can be diagnosed by measuring a two-point function in a suitable initial state.
Furthermore, considering the reformulation of Lieb-Robinson bounds in terms of the detectability of local excitations, we see that there is a direct correspondence between the possibility of sending information through the lattice with finite capacity and the locality of the interactions. Let us conclude by discussing some open problems and avenues for further research.

First, we derived our results in the setting of continuous-time, unitary dynamics.
One can ask whether similar results also hold for non-unitary dynamics or when one has discrete-time unitary dynamics.
Suppose, for example, that one only has promises for the time-evolution in discrete time-steps $\delta\! t$, i.e., for the quantities $C^{\Lambda}_{A,B}(j\,\delta\! t)$ with $j$ an integer.
Then the dynamics is given by $\tau_{\delta\! t}(A) =  U A U^*$ (omitting $\Lambda$ for clarity) and $U$ is a quasi-local unitary implementing the time-evolution for a time-step $\delta\! t$.
Compared to large distances on the lattice, this time-step is arbitrarily small.
Going to very large distances, we could therefore imagine that on large scales the dynamics is then generated by a quasi-local Hamiltonian.
Interestingly, according to Refs.~\cite{Farrelly2019,Zimboras202x}, this is not the case: there exist \emph{strictly local} unitary dynamics in discrete time, which cannot  be generated by quasi-local Hamiltonians, despite the fact that they can always be generated by local, finite depth quantum circuits \cite{Arrighi2011}. 
In the supplemental material, we present a very simple example of this behaviour \footnote{We thank Terry Farrelly and Zolt\'an Zimbor\'as for reminding us of this example.}. 
We thus conclude that it is indeed necessary to have promises for arbitrary short times.
Whether our results generalize to continuous-time, dissipative, Markovian time-evolution is an open problem.
Let us remind the reader that Lieb-Robinson bounds also hold for dissipative systems, whose time-evolution is generated by quasi-local Lindbladians \cite{Hastings2004a,Poulin2010,Nachtergaele2011,Barthel2011,Kliesch2014a}.
Here, however, we would have to show that any long-ranged term in a Lindbladian leads to a violation of the corresponding Lieb-Robinson bound.
It's not clear whether this is true for purely dissipative evolution.
Certainly the techniques used in this work would have to be extended to prove such a result.
If a counter-example indeed exists, it would show that there exists a close connection between unitarity of time-evolution and locality of the dynamics in the sense of information propagation, which would be remarkable.
We leave the investigation of this interesting point for future work.

Secondly, our results only need a promise over correlation-functions of single-site observables for very short times and in this sense are rather friendly to experimental settings.
However, to derive our results, we require a promise on the behaviour of such correlation functions for arbitrary states $\ket{\Psi}$.
In practice, such information is usually not available, since we do not prepare arbitrary states in real experiments. On the other hand, we are usually also not interested in the dynamics for arbitrary states, but only in some sub-set of states, for example from the low-energy sector. It is an interesting open problem to find out, whether having a bound of the form
\begin{align}
	\bra{\Psi}[\tau^\Lambda_t(A),B]\ket{\Psi} \leq  \norm{A}\norm{B} g(t) \exp(-\mu |x-y|)
\end{align}
for all states $\ket \Psi$ in some experimentally relevant subset of states implies that the dynamics, for this subset of states, can be represented by an exponentially decaying interaction.

Thirdly, we have shown our results for spin models and fermionic lattice models. 
It would be interesting to investigate, whether similar results as ours can be sensibly formulated and proven for the case of bosonic systems (restricting to bounded operators as observables).
In this setting it is still an open problem to formulate and prove general Lieb-Robinson bounds for interacting bosonic systems (see, however, Refs.~\cite{Nachtergaele2008,Nachtergaele2010,Nachtergaele2014}). 
Indeed, without additional assumptions, they cannot be proven due to the results in Ref.~\cite{Eisert2009}. 
The assumption of reasonably bounded correlation functions naturally rules out interactions that lead to violations of Lieb-Robinson bounds. 
Methods similar to the ones we used in this work may thus help identifying the general form of bosonic interactions for which Lieb-Robinson bounds can be proven.

{\bfseries Acknowledgements. } We would like to thank Terry Farrelly and Zolt\'an Zimbor\'as for interesting comments and discussions. H.~W. acknowledges support through the National Centre of Competence in Research \emph{Quantum Science and Technology} (QSIT). A.~H.~W. thanks the VILLUM FONDEN for its support with a Villum
Young Investigator Grant (Grant No. 25452) and its support via the QMATH Centre of
Excellence (Grant No. 10059).

\bibliographystyle{apsrev4-1}
\bibliography{literatur}

\appendix
\onecolumngrid

\section{Proof of Lemma~\ref{lemma:canonical}}
\label{sec:app:canonical}
In this section, we provide the proof of Lemma~\ref{lemma:canonical}, i.e., the statement that
the potential defined by
\begin{align}
	\hat \Phi_\Lambda(Z) &= \sum_{X\subseteq Z} (-1)^{|Z|-|X|}\Gamma^\Lambda_X[H_\Lambda]
\end{align}
is a $k$-body potential in canonical form on $\Lambda$ giving rise to the Hamiltonian $H_\Lambda$.
This form of the potential is motivated by the \emph{inclusion-exclusion} principle (see, e.g., Ref.~\cite{Gessel1996}).

We need to show the following facts: i) The potential gives rise to the same Hamiltonian $H_\Lambda$, ii) the potential is in canonical form, and
iii) the potential is $k$-body. Before we come to the details of this, we collect some simple facts about sums over subsets.
The crucial fact that we will need is an elementary form of the inclusion-exclusion principle. It states that, for any finite set $Z$, we have
\begin{align}\label{eq:basic-inc-ex}
\sum_{X\subseteq Z}(-1)^{|X|}
	&=  \delta_{\emptyset}(Z)
	= \begin{cases}
	0 \quad\  \text{if}\  Z \neq \emptyset\\
		1 \quad\ \text{if}\  Z = \emptyset.
	\end{cases}
\end{align}
This equality is very simple to prove: Clearly it's true for $Z=\emptyset$. Therefore consider any $Z$ with $Z\neq \emptyset$ and let $z\in Z$ be an arbitrary point. Then
\begin{align}
	\sum_{X\subseteq Z}(-1)^{|X|} &= \sum_{X\subseteq Z, z\in X}(-1)^{|X|} + \sum_{X\subseteq Z, z\notin X}(-1)^{|X|} =\sum_{X\subseteq Z\setminus\{z\}}(-1)^{|X\cup\{z\}|} + \sum_{X\subseteq Z\setminus\{z\}}(-1)^{|X|} = \sum_{X\subseteq Z\setminus\{z\}}(-1 + 1) (-1)^{|X|}  = 0.
\end{align}
We will also use that for any function of sets $f$, we can write 	
\begin{align}\label{eq:id1}
	\sum_{Z: X\subseteq Z \subseteq \Lambda} f(Z) = \sum_{Z\subseteq \Lambda\setminus X} f(Z\cup X).
\end{align}
In particular, we have
\begin{align}\label{eq:id2}
	\sum_{Z: X\subseteq Z \subseteq \Lambda}(-1)^{|Z|} = \sum_{Z\subseteq \Lambda\setminus X} (-1)^{|Z|+|X|} = (-1)^{|X|} \delta_\emptyset(\Lambda\setminus X).
\end{align}
Finally, we require the identity.
\begin{align}\label{eq:id3}
	\sum_{Z\subseteq \Lambda}\sum_{X\subseteq Z} f(X)g(X,Z) = \sum_{X\subseteq \Lambda}\sum_{Z: X\subseteq Z\subseteq \Lambda}f(X) g(X,Z)= \sum_{X\subseteq \Lambda}f(X) \sum_{Z: X\subseteq Z\subseteq \Lambda} g(X,Z),
\end{align}
which holds for arbitrary functions of sets $f$ and $g$. With these ingredients in place, we now start with the proof of the Lemma.

We first show that the potential sums up to give the Hamiltonian:
\begin{align}
\sum_{Z\subseteq \Lambda} \hat \Phi_{\Lambda}(Z) &= \sum_{Z\subseteq \Lambda}\sum_{X\subseteq Z}(-1)^{|Z|-|X|}\Gamma^\Lambda_X[H_\Lambda] \\
	&= \sum_{X\subseteq \Lambda} \Gamma^\Lambda_X[H_\Lambda]\underbrace{\sum_{Z:X\subseteq Z\subseteq \Lambda} (-1)^{|Z|-|X|}}_{\delta_\emptyset(\Lambda\setminus X)} = \Gamma^\Lambda_\Lambda[H_\Lambda]=H_\Lambda,
\end{align}
where we made use of \eqref{eq:basic-inc-ex} and $\Gamma^\Lambda_\Lambda=1$.
To show that the potential is in canonical form, we consider $Y\subset \Lambda$ and $Z\subset\Lambda$ such that $Z\cap Y^c\neq \emptyset$ (i.e., $Z\setminus (Z\cap Y)\neq\emptyset$). We then find, using $\Gamma^\Lambda_Y \circ \Gamma^\Lambda_X = \Gamma^\Lambda_{X\cap Y}$,
\begin{align}
	\Gamma^\Lambda_Y [\hat\Phi(Z)] &=  \sum_{X\subseteq Z} (-1)^{|Z|-|X|}\Gamma^\Lambda_{X\cap Y}[H_\Lambda].
\end{align}
Let us decompose $Z$ into $Z\cap Y$ and $Z\setminus (Z\cap Y)\neq \emptyset$. For any function of sets $f$, a sum over subsets of $Z$ can be decomposed as
\begin{align}
	\sum_{X\subseteq Z} f(X)= \sum_{X'\subseteq Z\setminus (Z\cap Y)} \sum_{X''\subseteq Z\cap Y} f(X'\cup X'').
\end{align}
We thereby find
\begin{align}
	\Gamma^\Lambda_Y [\hat\Phi_\Lambda(Z)] &=  \underbrace{\sum_{X'\subseteq Z\setminus (Z\cap Y)} (-1)^{|Z|-|X'|}}_{\delta_\emptyset\big(Z\setminus (Z\cap Y)\big)(-1)^{|Z|}=0}\sum_{X''\subseteq Z\cap Y}(-1)^{-|X''|} \Gamma^\Lambda_{X''}[H_\Lambda] = 0.
\end{align}
Finally, let us show that $\hat \Phi_\Lambda$ is $k$-body.
To see this, consider $Z$ with $|Z|>k$. We have to show that $\hat \Phi_\Lambda(Z)=0$.
First, we use the original $k$-body potential $\Phi$ to write
\begin{align}
	\hat\Phi_\Lambda(Z) = \sum_{X\subseteq Z}(-1)^{|Z|-|X|}\Gamma^\Lambda_X\big[\!\!\sum_{\substack{Y\subseteq \Lambda\\ |Y|\leq k}} \Phi(Y)\big] = \sum_{\substack{Y\subseteq \Lambda\\ |Y|\leq k}} \sum_{X\subseteq Z}(-1)^{|Z|-|X|}\Gamma^\Lambda_X[\Phi(Y)].
\end{align}
We now show that for each $Y$ the corresponding term is zero. To do this, we again split the sum over $X$ into a sum over $Z\cap Y$ and $Z\setminus (Z\cap Y)$ to get
\begin{align}
		\sum_{X\subseteq Z}(-1)^{|Z|-|X|}\Gamma^\Lambda_X[\Phi(Y)] &= \sum_{X'\subseteq Z\setminus(Z\cap Y)} (-1)^{|Z|-|X'|}
		\sum_{X''\subseteq Z\cap Y}(-1)^{|X''|} \Gamma_{X'\cup X''}^\Lambda[\Phi(Y)].
	\end{align}
	However, $\Phi(Y)\in \mc A_Y$ and the regions $X'$ do not overlap $Y$ by definition. Accordingly, we have $\Gamma^\Lambda_{X'\cup X''}[\Phi(Y)]= \Gamma^\Lambda_{X''}[\Phi(Y)]$. We thus find	
	\begin{align}
		\sum_{X\subseteq Z}(-1)^{|Z|-|X|}\Gamma^\Lambda_X[\Phi(Y)] &=  \underbrace{\sum_{X'\subseteq Z\setminus (Z\cap Y)} (-1)^{|Z|-|X'|}}_{\delta_\emptyset\big(Z\setminus (Z\cap Y)\big)(-1)^{|Z|}}
		\sum_{X''\subseteq Z\cap Y}(-1)^{|X''|} \Gamma_{X''}^\Lambda[\Phi(Y)]\\
		&= \delta_\emptyset\big(Z\setminus (Z\cap Y)\big)(-1)^{|Z|}
		\sum_{X''\subseteq Z\cap Y}(-1)^{|X''|} \Gamma_{X''}^\Lambda[\Phi(Y)].
	\end{align}
	Now, since $\Phi(Y)$ is zero for $|Y|>k$ the expression vanishes in this case. On the other hand, if $|Y|\leq k$, we can make use of our assumption that $|Z|>k$. This implies that $Z\setminus (Z\cap Y)\neq \emptyset$ and hence the $\delta$-function vanishes. This finishes the proof.

\section{Reduction map \& commutator}

For completeness, we provide in this section a proof of a well-known result about the quality of local restrictions of an observable, given that it almost commutes with all observables on the complement of this restriction \cite{Bachmann2011}.
\begin{lemma}\label{lemma:reductionspins}
	Let $A\in \mathcal{A}_\Lambda$ and assume that
	$\norm{[A,B]}\leq \varepsilon {\norm{B}}$  for all $B\in\mathcal{A}_Y$ for some subset $Y\subset\Lambda$. Then
  \begin{align}
	  \norm{(1-\Gamma_{Y^c})[A]}\leq {\varepsilon}.
  \end{align}
\end{lemma}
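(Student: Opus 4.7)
The plan is to derive the lemma from a standard twirling identity: the reduction map $\Gamma_{Y^c}$ coincides with the Haar-average of the adjoint action by unitaries supported on $Y$. Concretely, for any $A\in\mathcal{A}_\Lambda$,
\begin{align}
\Gamma_{Y^c}[A] \;=\; \int_{\mathrm{U}(d_Y)} (U\otimes\id_{Y^c})\, A\, (U^\dagger\otimes\id_{Y^c})\, \rmd U,
\end{align}
where $\rmd U$ denotes the normalized Haar measure on the unitary group on $\mathcal{H}_Y$. To justify this identity it is enough to check it on elementary tensors $A=A_Y\otimes A_{Y^c}$, for which the left-hand side evaluates to $(\Tr[A_Y]/d_Y)\,\id_Y\otimes A_{Y^c}$ by the definition of $\Gamma^\Lambda_{Y^c}$ given in the paper, while the right-hand side produces the same expression because the unitary Haar-twirl on $\mathcal{H}_Y$ sends every operator to its normalized trace times $\id_Y$. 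Linearity then extends the identity to all of $\mathcal{A}_\Lambda$.

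Given this representation, the proof reduces to a single triangle inequality under the integral. First I would write
\begin{align}
A-\Gamma_{Y^c}[A] \;=\; \int \bigl(A-UAU^\dagger\bigr)\, \rmd U \;=\; \int [A,U]\,U^\dagger\, \rmd U,
\end{align}
where I continue to suppress the $\id_{Y^c}$ tensor factor. Since $U$ is unitary, $\|[A,U]U^\dagger\|=\|[A,U]\|$, and because $U\in\mathcal{A}_Y$ with $\|U\|=1$, the hypothesis of the lemma gives $\|[A,U]\|\leq\varepsilon$ for every $U$ in the integration domain. Pulling the norm inside the Haar integral via the triangle inequality then yields
\begin{align}
\norm{(1-\Gamma_{Y^c})[A]} \;\leq\; \int \norm{[A,U]}\, \rmd U \;\leq\; \varepsilon,
\end{align}
which is the desired bound.

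The only genuinely nontrivial ingredient is the Haar-twirl identification of $\Gamma_{Y^c}$, so that is where I would spend the most care: one must check that the normalization conventions used for $\Gamma_{Y^c}$ in the paper (with the maximally mixed state $\mathbbm{I}_{Y}=\id_Y/d_Y$) match exactly the output of the normalized Haar average, and that the twirl indeed acts trivially on the $Y^c$-factor. Everything else is a two-line triangle-inequality estimate, and no further assumption on $A$ beyond the hypothesis is required. An alternative to the Haar-integral route would be to expand $A$ in a product operator basis of $\mathcal{A}_Y$ (e.g.\ generalized Paulis) and test the commutator bound against each basis element, but the twirling argument avoids any basis-dependent bookkeeping and generalizes cleanly to the fermionic setting discussed in the supplemental material.
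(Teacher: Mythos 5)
Your proof is correct and follows essentially the same route as the paper: both represent $\Gamma_{Y^c}$ as a Haar twirl over unitaries supported on $Y$, rewrite $A-UAU^\dagger$ as $[A,U]U^\dagger$, and conclude by triangle inequality, unitary invariance of the norm, and the hypothesis applied to each $U$. The extra care you take in verifying the twirl identity on elementary tensors is a reasonable addition but not a different argument.
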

\begin{proof}
  The reduction map $\Gamma_{Y^c}^\Lambda$ onto the complement of $Y$ can be implemented as a twirl over the unitary group on $Y$, i.e.
  \begin{align}
	\Gamma_{Y^c}^\Lambda[A] = \int \rmd U_{Y}\, ( \id_{Y^c} \otimes U)A( \id_{Y^c} \otimes U^*),
\end{align}
where the integral runs over the normalized Haar measure on $Y$. Accordingly, the norm difference of $A$ and its reduction to $Y^c$ can be upper bounded by
\begin{align}
  \norm{(1-\Gamma_{Y^c})[A]} = \norm{ A -  \int \rmd U_{Y}\, ( \id_{Y^c} \otimes U)A( \id_{Y^c}   \otimes U^*)}
	\leq \int \rmd U_{Y}\, \norm{[A,\left(\id_{Y^c}\otimes U\right)] \left(\id_{Y^c}\otimes U^*\right)}\leq \varepsilon,
\end{align}
where in the last step, we used the unitary invariance of the operator norm, our assumption on $A$ and the normalization of the Haar measure.
\end{proof}

\section{From LR bounds of single-site observables to general LR bounds}
\label{app:LRbasis}
We have emphasized in the main text that our result implies that LR-bounds for single-site observables imply LR-bounds for observables with arbitrary support. 
Indeed, this follows immediately by combining our main result with theorem~\ref{thm:LRbounds}. 
One may be tempted to think that this follows directly from~\ref{thm:LRbounds}, since any operator $A$ supported on region $X$ may be expressed as
\begin{align}
    A = \sum_{i_1,\ldots,i_{|X|}} c_{i_1 \cdots i_{|X|}} S_{i_1}\otimes\cdots\otimes S_{i_{|X|}},
\end{align}
where the $S_{i}$ with $i=1,\ldots,d^2$ provide an operator basis for the space of operators acting on a single site of the lattice. Now using the Leibniz-rule for the commutator and the triangle inequality, a bound for single-site observables implies a bound for more general observables. However, it is important to realize that this procedure yields an additional pre-factor $d^{2|X|}$ on the r.h.s. of the LR-bound. For large regions $X$, for example, half the system, this pre-factor diverges exponentially quickly for large systems, making the resulting bound useless. 
Our result, on the other hand, shows that a Lieb-Robinson bound for single-site observables implies a corresponding bound for arbitrary observables without such a pre-factor.

 \section{Fermionic systems}
In the main text, we discussed the setting of a lattice of spins. Here we explain how our results transfer to fermionic lattice systems.
While the basic idea and proof-strategy are exactly identical, we need to be a bit more careful due to the parity super-selection rule for fermions.
To do this, let us first briefly recapitulate the formalism of fermionic lattice systems (see, e.g., Ref.~\cite{BratteliRobinson2} for an introduction to the mathematical formalism of fermionic systems). 
For recent Lieb-Robinson bounds in the context of fermionic lattice systems, see Ref.~\cite{Nachtergaele2017}, which also includes a detailed discussion of the formalism of fermionic lattice systems from which we took inspiration. However, our discussion of the reduction map is quite different and might be of independent interest.

\subsection{The basic set-up of fermionic lattice systems}
To every point of our lattice $\ZZ^D$, we associate an index-set $I_{x}=\{1,\ldots, d\}$.
For every index $j\in I_{x}$ we define annihilation and creation operators $f_{x,j}^{\phantom{\dagger}}, f_{x,j}^\dagger$ fulfilling the canonical anti-commutation relations:
\begin{align}
	\big\{f_{x,j}^{\phantom \dagger},f_{y,k}^\dagger\big\} = \delta_{x,y}\delta_{j,k}\id,\quad \big\{f_{x,j},f_{y,k}\big\} = 	\big\{f_{x,j}^\dagger,f_{y,k}^\dagger\big\} = 0.
\end{align}
The algebra $\mc A_X$ of local operators associated to a finite region $X$ is then generated by arbitrary monomials of the $f_{x,j}^{\phantom\dagger}, f_{y,k}^\dagger$ with $x,y\in X$. Every $\mc A_X$ is isomorphic to the matrix-algebra $M_{2^{d |X|}}$ of $2^{d|X|}\times 2^{d|X|}$ matrices.
A special role is played by the operator
\begin{align}
	P_{X} = \prod_{x\in X} \prod_{j\in I_x} (-1)^{n_{x,j}},
\end{align}
where $n_{x,j} := f_{x,j}^\dagger f_{x,j}^{\phantom \dagger}$ is the \emph{number operator} associated to the \emph{mode} $(x,j)$. 
Due to the canonical anti-commutation relations the ordering of the operators in the definition of $P_X$ does not matter. 
$P_X$ is called the \emph{parity operator} associated to the region $X$ and fulfills $P_X^2=\id$ and $P_X^\dagger  = P_X$.
Every $A\in \mc A_X$ can be decomposed into an \emph{even} and \emph{odd} part as
\begin{align}
	A^+ = \frac{A+P_X A P_X}{2},\quad A^- = \frac{A-P_X A P_X}{2}.
\end{align}
Then $[A^+,P_X]=0$ and $\{A^-,P_X\}=0$. We write
\begin{align}
	\mc A_X = \mc A_X^+ + \mc A_X^-,\quad	\mc A^+_X := \left\{ A^+ \ |\ A \in \mc A_X\right\},\quad \mc A^-_X := \left\{A^- \ |\ A\in \mc A_X \right\}.
\end{align}
Due to the parity super-selection rule of fermionic systems, all \emph{physical observables} are even self-adjoint operators, and hence part of some $\mc A_X^+$. We note that $\mc A_X^+$ is an algebra containing the identity, but $\mc A_X^-$ is not an algebra, since $a,b\in \mcA_X^-$ implies $ab \in \mc A_X^+$.
Alternatively, we can characterize $\mc A^+_X$ as being generated by monomials of an even number of the $f_{x,j}^{\phantom\dagger}, f_{y,k}^\dagger$ with $x,y\in X$, while $\mc A^-_X$ is generated by monomials of an odd number of the creation and annihilation operators. 
Due to the parity super-selection rule, the algebra of physical operators $\mc A^+_X$ is \emph{not} isomorphic to a full matrix-algebra, but isomorphic to a direct sum
\begin{align}
	\mc A^+_X \simeq M_{2^{d|X|-1}}\oplus M_{2^{d|X|-1}} \simeq M_{2^{d|X|-1}}\otimes \proj{0} + M_{2^{d|X|-1}}\otimes \proj{1}.
\end{align}
where the two direct summands correspond to the $+1$ and $-1$ eigenspaces of $P_X$, whose projectors we denote by $P_X^\pm$ and, which, in the above decomposition, are given by $P_X^+ = \id\otimes \proj{0}$, $P_X^- = \id\otimes \proj{1}$. Alternatively, we can write them as $P_X^\pm = (\id \pm P_X)/2$. Any operator $A\in \mc A_X$ can be decomposed as
\begin{align}
	A = (P_X^+ + P_X^-) A (P_X^+ + P_X^-) = \underbrace{A_{++} + A_{--}}_{A^+} + \underbrace{A_{+-} + A_{-+}}_{A^-}
\end{align}
with $A_{++} = P_X^+ A P_X^+,\ A_{+-} = P_X^+ A P_X^-$ and so forth.
For any algebra $\mc A_X$ there is a unique \emph{even tracial state} $\omega^{\Tr{}}$, i.e., a positive and normalized linear function, which vanishes on $\mc A_X^-$ and fulfills
\begin{align}
	\omega^{\Tr{}}[AB] = \omega^{\Tr{}}[A]\omega^{\Tr{}}[B],
\end{align}
for $A\in \mc A_{X_1}, B \in \mc A_{X_2}$ with $X_1\cap X_2=\emptyset$, and for any two $A,B\in \mc A_X$ we have $\omega^{\Tr{}}[AB]= \omega^{\Tr{}}[BA]$.

\subsection{The reduction map}
We also need to define a reduction map as in the case of spin-systems. Here, an additional difficulty arises in the setting of fermionic systems. To understand this issue, we should remember that, in any given physical situation, we should be allowed to add auxiliary systems to our description which do not take part in the time-evolution of the system of interest. For spin systems, such "innocent bystanders" do not make any difference. However, for fermionic systems, the situation is different due to the parity super-selection rule and the canonical anti-commutation relations. 
Anticipating this issue, let us therefore consider, next to the lattice of fermions, also another system $S$ of $|S|$ fermions, described by associated annihilation and creation operators $f^\dagger_s,f^{\phantom\dagger}_s$ for $s=1,\ldots, S$ with $\{f^\dagger_{x,j}, f^{\phantom\dagger}_s\}=0$ and $\{f_s, f_{x,j}\}=0$ for any $x\in \ZZ^D$. For $X\subseteq \Lambda\subset \ZZ^D$, we then introduce a reduction map
\begin{align}
	\Gamma^{\Lambda S}_X: \mc A_{\Lambda S} \rightarrow \mc A_{\Lambda S},
\end{align}
that maps physical operators on $\Lambda$ to physical operators on $X$, $\Gamma^{\Lambda S}_X:\mc A_{\Lambda}^+ \rightarrow \mc A_{X}^+$, and fulfills for any $A\in \mc A_{\Lambda S}^+$ (with $X^c=\Lambda\setminus X$)
\begin{align}
	\left[\Gamma^{\Lambda S}_X[A], B\right]=0\quad\forall B\in \mc A_{X^c S}^+.
\end{align}
Importantly, we will see later that the map $\Gamma^{\Lambda S}_X:\mc A_{\Lambda}^+ \rightarrow \mc A_{X}^+$ does not depend on $S$ as long as $|S|>1$. 

Let us now show how the reduction map is constructed. As shown in the proof of Lemma~\ref{lemma:reductionspins}, in the spin-case the reduction map can be written as
\begin{align}
	\Gamma^\Lambda_X[A] = \int \mathrm{d}U_{X^c}\ (\id_{X}\otimes U) A (\id_{X}\otimes U^\dagger), \quad\text{(Spins)}
\end{align}
where $\mathrm{d}U_{X^c}$ denotes the normalized Haar-measure on $X^c$. For the fermionic case, the construction we need is more involved.
First note that we can decompose
\begin{align}
	\mc A^+_{X^c S}	&\simeq (\mc A^+_{X^c}\otimes \mc A^+_S)\oplus(\mc A^-_{X^c}\otimes \mc A^-_S) \simeq M_{2^{d|X^c|+S-1}}\otimes \proj{0} + M_{2^{d|X^c|+S-1}}\otimes \proj{1},\\
\end{align}
where the last decomposition is with respect to the parity operator $P_{X^cS}$.
The group $\mc U_{X^c S}^+$ of \emph{even unitaries} on $X^c S$ thus corresponds to $U\big(2^{d|X^c|+S-1}\big)\oplus U\big(2^{d|X^c|+S-1}\big)$.
We now define the reduction maps as
\begin{align}
\Gamma^{\Lambda S}_X[A] &= \int \mathrm{d}U\, U A U^\dagger
\end{align}
where $\mathrm{d}U$ is the normalized Haar measure over $\mc U_{X^cS}^+$.
Since this group corresponds to two copies of the group $U\big(2^{d|X^c|+S-1}\big)$, one for each parity sector on $X^cS$, we can re-write this map as
\begin{align}
	\Gamma^{\Lambda S}_X[A] &= \int \mathrm{d}U_+ \int \mathrm{d} U_-\, (U_+ P_{X^cS}^++ U_- P_{X^cS}^-)A(U_+ P_{X^cS}^++ U_- P_{X^cS}^-)^\dagger \\
	&= \int \mathrm{d}U_+\, U_+ P_{X^cS}^+ A P_{X^cS}^+U_+^\dagger +  \int \mathrm{d}U_-\, U_- P_{X^cS}^- A P_{X^cS}^-U_-^\dagger \\
	&+ \left[\int \mathrm{d}U_+\, U_+\,  P_{X^cS}^+ AP_{X^cS}^- \int \mathrm{d}U_-\, U_-^\dagger  + \text{h.c.}\right]\\
	&= \int \mathrm{d}U_+\, U_+ P_{X^cS}^+ A P_{X^cS}^+U_+^\dagger +  \int \mathrm{d}U_-\, U_- P_{X^cS}^- A P_{X^cS}^-U_-^\dagger,
\end{align}
where $\mathrm{d}U_{\pm}$ are normalized Haar-measures over $U\big(2^{d|X^c|+S-1}\big)$ and we used that $\int \mathrm{d}U_{\pm} U_{\pm} =0$. For the Haar-measure, we have
\begin{align}
	\int \mathrm{d}U_+\, U_+ P_{X^cS}^+ A P_{X^cS}^+U_+^\dagger = \frac{\Tr_{P_{X^cS}^+}\big[ A P_{X^cS}^+\big]}{2^{d|X^c|+|S|-1}}P_{X^c S}^+ = 2 \omega^{\Tr{}}_{X^c S}\big[A P_{X^cS}^+\big]P_{X^c S}^+,
\end{align}
where $\omega^{\Tr{}}_{X^cS}$ denotes the \emph{partial} tracial state or maximally mixed state on $X^c S$. On operators of the form $A_XB_{X^c S}$, it acts as
\begin{align}
	\omega^{\Tr{}}_{X^cS}[A_XB_{X^cS}] = A_X \omega^{\Tr{}}[B_{X^cS}].
\end{align}
Note that wenn an operator is supported in the support of the tracial state, we can omit the subscript. We summarize this result in the following Lemma.

\begin{lemma}\label{lemma:restr_props} The reduction map is a unital, completely positive map, which can be written as
	\begin{align}
		\Gamma^{\Lambda S}_X[A] &= 2\left(\omega^{\Tr{}}_{X^cS}\big[A P_{X^c S}^+\big]P_{X^cS}^++\omega^{\Tr{}}_{X^cS}\big[A P_{X^c S}^-\big]P_{X^cS}^-\ \right).
	\end{align}
	It fulfills:
	\begin{enumerate}
		\item If $A,C\in \mc A^+_X$ and $B\in \mc A^+_{X^c S}$, we have $\Gamma^{\Lambda S}_X[ABC] = A\Gamma^{\Lambda S}_{X}[B]C$.
		\item If $B\in \mc A_{X^c S}^-$, then $\Gamma^{\Lambda S}_{X}[B]=0$.
		\item If $|S|\geq 1$, the restriction to $\mc A_\Lambda$ maps to $\mc A_X^+$, $\Gamma^{\Lambda S}_X:\mc A_\Lambda\rightarrow \mc A_X^+$, and the result does not depend on $S$.
		\item For $X\subseteq \Lambda , Y\subseteq \Lambda$ and when restricted to $\mc A_\Lambda^+$,  we have
			\begin{align}
				\Gamma^{\Lambda S}_X \circ \Gamma^{\Lambda S}_Y =\Gamma^{\Lambda S}_Y \circ \Gamma^{\Lambda S}_X = \Gamma^{\Lambda S}_{X\cap Y}.
			\end{align}
	\end{enumerate}
	\begin{proof}
	1. is obvious. 2. follows from the fact that $\omega^{\Tr{}}_{X^c S}$ is an even state and hence vanishes on odd operators.
	For 3. we make the following observations. First, if $A\in\mc A_\Lambda$ is in $\mc A_{X}\otimes \mc A_{X^c}^-$, then the reduction vanishes by 1. and 2.
	Therefore consider operators of the form $A_X A_{X^c}$ with $A_{X^c}\in \mc A_{X^c}^+$. By 1. we can forget about the $A_X$ part and in the following consider $A\in \mc A_{X^c}^+$.
		We use that for $|S|\geq 1$, we have
		\begin{align}
			P_{X^c S}^+ &= \frac{\id + P_{X^c S}}{2} = \frac{1}{2}\left( \big(P_{X^c}^++ P_{X^c}^-\big) \big(P_{S}^++ P_{S}^-\big) + \big(P_{X^c}^+- P_{X^c}^-\big) \big(P_{S}^+- P_{S}^-\big)\right) = P_{X^c}^+ P_S^+ + P_{X^c}^- P_S^-,\\
			P_{X^c S}^- &= \frac{\id - P_{X^c S}}{2} = \frac{1}{2}\left( \big(P_{X^c}^++ P_{X^c}^-\big) \big(P_{S}^++ P_{S}^-\big) - \big(P_{X^c}^+- P_{X^c}^-\big) \big(P_{S}^+- P_{S}^-\big)\right) = P_{X^c}^+ P_S^- + P_{X^c}^- P_S^+.
		\end{align}
		Since $P_{X^c}^+ P_{X^c S}^\pm = P_{X^c}^+ P_S^\pm$ and $P_{X^c}^- P_{X^cS}^\pm = P_{X^c}^- P_S^\mp$, we have
		\begin{align}
			AP_{X^c S}^\pm = A (P_{X^c}^+ + P_{X^c}^-)P_{X^c S}^\pm = A(P_{X^c}^+P_{S}^\pm + P_{X^c}^-P_S^\mp).
		\end{align}
		Using that $\omega^{\Tr{}}_{X^c S}$ is a product-state and $\omega^{\tr{}}_S\big[P_S^\pm\big]=1/2$, we find
		\begin{align}
			\omega^{\Tr{}}_{X^c S}\big[AP_{X^c S}^+\big] &= \omega^{\Tr{}}_{X^c}[AP_{X^c}^+]\omega^{\tr{}}_S\big[P_S^+\big] +  \omega^{\Tr{}}_{X^c}[AP_{X^c}^-]\omega^{\tr{}}_S\big[P_S^-\big]\\
			&=\frac{1}{2}\omega^{\Tr{}}_{X^c}\big[A(P_{X^c}^++P_{X^c}^-)] = \frac{1}{2}\omega^{\Tr{}}_{X^c}[A]
		\end{align}
		and
		\begin{align}
			\omega^{\Tr{}}_{X^c S}\big[AP_{X^c S}^-\big] &= \omega^{\Tr{}}_{X^c}[AP_{X^c}^+]\omega^{\tr{}}_S\big[P_S^-\big] +  \omega^{\Tr{}}_{X^c}[AP_{X^c}^-]\omega^{\tr{}}_S\big[P_S^+\big]\\
			&=\frac{1}{2}\omega^{\Tr{}}_{X^c}\big[A(P_{X^c}^++P_{X^c}^-)] = \frac{1}{2}\omega^{\Tr{}}_{X^c}[A].
		\end{align}
		Consequently,
		\begin{align}
			\Gamma^{\Lambda S}_X[A] = \omega^{\Tr{}}_{X^c}[A]\left(P_{X^cS}^+ + P_{X^c S}^-\right) = \omega^{\Tr{}}_{X^c}[A]\id\ \in \mc A_{X}^+.
		\end{align}
		Finally, we need to show 4.  We consider operators $A\in\mc A_{(X\cup Y)^c}, B\in \mc A_{X\setminus Y},C\in\mc A_{Y\setminus X}, D\in\mc A_{X\cap Y}$ such that $ABCD\in \mc A_\Lambda^+$.
		Then
		\begin{align}
			\Gamma^{\Lambda S}_{X\cap Y}[ABCD]=\Gamma^{\Lambda S}_{X\cap Y}[ABC]D = \omega^{\Tr{}}[ABC]D.
		\end{align}
		Similarly, we have
		\begin{align}
			\Gamma^{\Lambda S}_X\circ \Gamma^{\Lambda S}_Y[ABCD] =  \Gamma^{\Lambda S}_X\big[\Gamma^{\Lambda S}_Y[AB]CD] = \Gamma^{\Lambda S}_X[C]D\omega^{\Tr{}}[AB] = D\omega^{\Tr{}}[C] \omega^{\Tr{}}[AB] = \omega^{Tr}[ABC]D,
		\end{align}
		where we used that $\omega^{\Tr{}}$ is a product-state. For the reversed order:
		\begin{align}
			\Gamma^{\Lambda S}_Y\circ \Gamma^{\Lambda S}_X[ABCD]=\pm\Gamma^{\Lambda S}_Y\big[B\Gamma^{\Lambda S}_X[AC]D\big] = \pm \Gamma^{\Lambda S}_Y\big[B\big]D\omega^{\Tr{}}[AC] =\pm D \omega^{\Tr{}}[ACB],
		\end{align}
		where the negative sign only appears if both $A$ and $B$ are odd, in which case the expression vanishes. We thus conclude that all three expressions vanish if any of the $A,B,C$ are odd and otherwise give the same result. Hence they always give the same result.
	\end{proof}
\end{lemma}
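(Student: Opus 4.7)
The plan is to verify the four listed properties in turn, building on the closed-form expression for $\Gamma^{\Lambda S}_X$ that was derived just above the lemma as a Haar average over the group of even unitaries on $X^cS$. Unitality and complete positivity are automatic from the twirl structure (any integral of unitary conjugations against a probability measure is a unital CP map), so the real work is only the four bulleted claims.

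Properties 1 and 2 should fall out almost immediately. For property 1, I would note that the averaging unitaries act trivially on $X$ and that $A,C\in\mc A^+_X$ commute with both parity projectors $P^{\pm}_{X^cS}$, so they can be pulled outside the tracial-state brackets in the closed-form expression. For property 2, the even tracial state $\omega^{\Tr{}}_{X^cS}$ annihilates odd operators, and multiplying an odd $B\in\mc A^-_{X^cS}$ by the even projector $P^{\pm}_{X^cS}$ keeps it odd; both coefficients in the formula therefore vanish.

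Property 3 is where the genuine calculation sits. I would first use property 1 to reduce to an operator $A\in\mc A_{X^c}$, then insert the identity $P^\pm_{X^cS}=P^+_{X^c}P^\pm_S+P^-_{X^c}P^\mp_S$ recorded just before the lemma, factor $\omega^{\Tr{}}_{X^cS}$ as $\omega^{\Tr{}}_{X^c}\otimes\omega^{\Tr{}}_S$, and use $\omega^{\Tr{}}_S[P^\pm_S]=\tfrac{1}{2}$ — which is the one place where the hypothesis $|S|\ge 1$ enters — to obtain $\omega^{\Tr{}}_{X^cS}[AP^\pm_{X^cS}]=\tfrac{1}{2}\omega^{\Tr{}}_{X^c}[A]$. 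Summing the two terms in the formula collapses $P^+_{X^cS}+P^-_{X^cS}$ to $\id$, leaving $\omega^{\Tr{}}_{X^c}[A]\,\id\in\mc A^+_X$, which manifestly does not depend on $S$ beyond the harmless $\id_S$ factor.

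Finally, for property 4 I would pick a generic even operator on $\Lambda$, write it as a product $ABCD$ with supports in $(X\cup Y)^c$, $X\setminus Y$, $Y\setminus X$, $X\cap Y$, and evaluate each of the three composites by repeated use of properties 1--3: in each case every factor outside $X\cap Y$ becomes an $\omega^{\Tr{}}$-scalar and the $D$-factor is retained, so that all three composites produce $\omega^{\Tr{}}[A]\,\omega^{\Tr{}}[B]\,\omega^{\Tr{}}[C]\,D$. The main (and really the only) subtlety is the sign picked up when an odd factor has to be commuted past another odd factor during the reordering of $AB$; however, whenever any individual factor among $A,B,C$ is odd, property 2 forces the corresponding $\omega^{\Tr{}}$ to vanish, so any offending sign is multiplied by zero. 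This is the obstacle I would anticipate having to track most carefully, but it is self-resolving once the parity bookkeeping is done.
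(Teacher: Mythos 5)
Your proposal is correct and follows essentially the same route as the paper: the same closed-form twirl expression for unitality/CP, the same parity-projector identity $P^\pm_{X^cS}=P^+_{X^c}P^\pm_S+P^-_{X^c}P^\mp_S$ together with $\omega^{\Tr{}}_S[P^\pm_S]=\tfrac12$ for property 3, and the same $ABCD$ decomposition with parity-sign bookkeeping resolved by vanishing of $\omega^{\Tr{}}$ on odd factors for property 4.
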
%
We emphasize that $\Gamma^{\Lambda S}_X$ does not map all observables to observables supported in $X$. For example, the parity operator $P_{X^c S}$ is mapped to itself. This is easy to see from the definition via the average over even unitaries, but can also be seen from the above form of the map using $P_{X^c S} = P_{X^c S}^+ - P_{X^c S}^-$:
\begin{align}
	\Gamma^{\Lambda S}_X[ P_{X^c S}^+ - P_{X^c S}^-] &= 2\left(\omega^{\Tr{}}_{X^cS}\big[P_{X^c S}^+\big]P_{X^cS}^+-\omega^{\Tr{}}_{X^cS}\big[P_{X^c S}^-\big]P_{X^cS}^-\ \right)\\
	&=2\left(\frac{1}{2}P_{X^cS}^+-\frac{1}{2}P_{X^cS}^-\ \right) = P_{S^cX}.
\end{align}
We close the discussion of the reduction map for fermionic lattice systems by proving the analogue of Lemma~\ref{lemma:reductionspins}:
\begin{lemma}
	Suppose $|S|\geq 1$ and let $A\in \mc A_\Lambda^+$ be such that
	\begin{align}
		\norm{[A,B]}\leq \eps\norm{B} \quad \forall B\in \mc A_{X^c S}^+.
	\end{align}
	Then $\Gamma^{\Lambda S}_X[A]\in \mc A_X^+$ and
	\begin{align}
	\norm{A-\Gamma^{\Lambda S}_X[A]}\leq \eps.
	\end{align}
	\begin{proof}
		That $\Gamma^{\Lambda S}_X[A]\in \mc A_X^+$ for $|S|\geq 1$ was shown in the previous Lemma. The second claim follows as in the proof of Lemma~\ref{lemma:reductionspins}:
		\begin{align}
			\norm{A - \Gamma^{\Lambda S}_X[A]} \leq \int_{\mc U_{X^c}^+} \mathrm{d}U\,\norm{A UU^\dagger - UAU^\dagger} =  \int_{\mc U_{X^c}^+} \mathrm{d}U\,\norm{[A,U]}\leq \eps,
		\end{align}
		where we used the triangle inequality and normalization of the Haar measure in the first step and unitary invariance of the norm in the second step.
	\end{proof}
\end{lemma}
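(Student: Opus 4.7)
The plan is to mirror the spin-case argument in Lemma~\ref{lemma:reductionspins} as closely as possible, leveraging the twirl representation of $\Gamma^{\Lambda S}_X$ that was established in the construction preceding Lemma~\ref{lemma:restr_props}. The statement has two parts: (i) that $\Gamma^{\Lambda S}_X[A]\in\mc A_X^+$, and (ii) the quantitative bound $\norm{A-\Gamma^{\Lambda S}_X[A]}\leq \eps$. Part (i) requires no new work; it is precisely item 3 of Lemma~\ref{lemma:restr_props}, which asserts that whenever $|S|\geq 1$ the reduction map lands in $\mc A_X^+$ independently of $S$. So the real content is part (ii).

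For part (ii), I would start from the definition
\begin{align}
\Gamma^{\Lambda S}_X[A] = \int_{\mc U_{X^c S}^+}\! \mathrm{d}U\, U A U^\dagger,
\end{align}
where the integral is the normalized Haar measure over the group $\mc U_{X^c S}^+$ of even unitaries supported on $X^c S$. Using that $A = \int \mathrm{d}U\, A$ (by normalization) and pulling the norm inside the integral via the triangle inequality yields
\begin{align}
\norm{A - \Gamma^{\Lambda S}_X[A]} \leq \int \mathrm{d}U\, \norm{A - U A U^\dagger}.
\end{align}
Unitary invariance of the operator norm then converts the integrand into $\norm{(A-UAU^\dagger)U} = \norm{[A,U]}$, and integrating against the normalized measure reduces the problem to a uniform bound on $\norm{[A,U]}$ for $U\in\mc U_{X^c S}^+$.

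The step where one must be careful, and which I expect to be the conceptual crux, is invoking the hypothesis. The assumption only controls commutators with \emph{even} operators $B\in\mc A_{X^c S}^+$, so an ordinary full-unitary twirl (as in the spin case) would be illegitimate. Fortunately, the twirl above already runs over even unitaries only: every $U\in\mc U_{X^c S}^+$ sits inside $\mc A_{X^c S}^+$ and has $\norm{U}=1$, so the hypothesis applies and gives $\norm{[A,U]}\leq \eps$ uniformly. Combining with the previous display finishes the proof. Thus the subtlety is not analytic but structural: it is exactly because the reduction map was engineered as a Haar average over even unitaries, rather than over the full unitary group, that the fermionic superselection-respecting hypothesis can be matched to the twirl without any parity-dependent correction factors.
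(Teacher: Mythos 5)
Your proof is correct and takes essentially the same approach as the paper: both reduce $\norm{A-\Gamma^{\Lambda S}_X[A]}$ to a Haar-averaged commutator norm via the triangle inequality and unitary invariance, then invoke the hypothesis on the crucial observation that the twirl runs over \emph{even} unitaries $\mc U_{X^cS}^+\subset\mc A_{X^cS}^+$, precisely the class controlled by the assumption. (Incidentally, the paper's displayed integral reads $\int_{\mc U_{X^c}^+}$ rather than $\int_{\mc U_{X^cS}^+}$, which appears to be a typo inconsistent with the definition of $\Gamma^{\Lambda S}_X$; your version is the consistent one.)
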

As shown in Lemma~\ref{lemma:restr_props}, the map $\Gamma^{\Lambda S}_X:\mc A_\Lambda\rightarrow \mc A_X^+$ does not in fact depend on $S$. We can hence define
\begin{align}\label{eq:fermionic_restriction_map}
	\Gamma^{\Lambda}_X:\mc A_\Lambda\rightarrow \mc A_X^+,\quad A\mapsto \Gamma^{\Lambda S}_X[A],\quad \text{for some $|S|\geq 1$},
\end{align}
and this map fulfills all the relevant properties that we needed in the case of spins.

\subsection{Dynamics and the fermionic result}
Similarly as for spin-systems, we describe dynamics by a potential, which now associates \emph{even} operators $\Phi(X)=\Phi(X)^\dagger \in \mc A_X^+$ to finite regions $X$ of the lattice.
Therefore the Hamiltonian $H_\Lambda$ is even and the induced time-evolution $\tau_t^\Lambda$ maps $\mc A^\pm_\Lambda$ to $\mc A^\pm_\Lambda$ as required. Analogously to the spin-setting,  we call such a potential $k$-body if $\Phi(X)=0$ for $|X|>k$. As for spin-systems, exponentially decaying $k$-body potentials lead to Lieb-Robinson bounds also in the case of Fermions (see, e.g., Ref.~\cite{Nachtergaele2017} for a recent account). Similarly as for the reduction map, we explicitly take into account the possibility of fermionic auxiliary systems $S_1,S_2$ that are left invariant under the time-evolution. This allows us to recover our result also in the case of fermionic lattice systems:
\begin{theorem}[Fermionic converse to Lieb-Robinson bounds]\label{thm:converse_fermions}
	Consider a fermionic lattice system with $k$-body potential and assume that there exist a function $h(t,x)$ with $h(0,x)=0$, with $h(0,x)=0$ for $|x|>0$ and differentiable in $t$, such that for any finite subset $\Lambda\subset \ZZ^D$, any two points $x\neq y\in \ZZ^D$ and any two fermionic systems $S_1,S_2$ we have
	\begin{align}
		C^\Lambda_{x,y}(t)&:=\max_{A,B} \frac{\norm{[\tau^\Lambda_t(A),B]}}{\norm{A}\norm{B}}  \leq h(t,|x-y|),
	\end{align}
	where the maximization is over physical operators on $\{x\}\cup S_1$ and $\{y\}\cup S_2$: $A\in \mc A_{\{x\}\cup S_1}^+, B\in \mc A_{\{y\}\cup S_2}^+$.
	Then there exists a constant $m_k$ only depending on $k$, and a $k$-body potential $\hat \Phi_\Lambda$ on $\Lambda$, giving rise to the same time-evolution $\tau^\Lambda_t$, with
	\begin{align}
		\norm{\hat\Phi_\Lambda(X)} \leq m_k\, \partial_{\!t}\! h\big(0,\mathrm{diam}(X)\big).
	\end{align}
	for all $X\subseteq\Lambda$ with $|X|\geq 2$.
\begin{proof}
	The proof is essentially identical to the spin-case using the fermionic reduction map defined in \eqref{eq:fermionic_restriction_map}.
\end{proof}
\end{theorem}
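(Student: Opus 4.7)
The plan is to mirror the proof of Theorem~\ref{thm:converse} step by step, substituting the fermionic reduction map $\Gamma^\Lambda_X$ defined in \eqref{eq:fermionic_restriction_map} for its spin counterpart. I would proceed in four stages.

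First, I would establish the fermionic analogue of Lemma~\ref{lemma:canonical}: given a $k$-body even potential with Hamiltonian $H_\Lambda$, the formula
\begin{align}
\hat\Phi_\Lambda(Z) = \sum_{X\subseteq Z}(-1)^{|Z|-|X|}\Gamma^\Lambda_X[H_\Lambda]
\end{align}
defines a $k$-body even potential in canonical form summing to $H_\Lambda$. The combinatorial identities \eqref{eq:basic-inc-ex}--\eqref{eq:id3} are purely set-theoretic and carry over verbatim; the only ingredient tying them to the fermionic setting is the composition rule $\Gamma^\Lambda_X\circ\Gamma^\Lambda_Y = \Gamma^\Lambda_{X\cap Y}$ on even operators, which is exactly item~4 of Lemma~\ref{lemma:restr_props}. $k$-body preservation uses in addition that reductions leave even potential terms $\Phi(Y)$ invariant when $Y\subseteq X$, which follows from items~1 and~3 of the same lemma.

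Second, I would extract the basic commutator inequality. Fix $x\neq y$ and choose fermionic auxiliary systems $S_1,S_2$ disjoint from each other and from $\Lambda$. For normalized $A\in\mc A^+_{\{x\}\cup S_1}$ and $B\in\mc A^+_{\{y\}\cup S_2}$, the operators commute at $t=0$ because they are \emph{even} operators on disjoint supports. Differentiating the hypothesis at $t=0$, using $h(0,|x-y|)=0$, gives
\begin{align}
\|[[H_\Lambda,A],B]\| \leq \partial_t h(0,|x-y|) =: \eps_{x,y}.
\end{align}

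Third, I would apply the fermionic reduction lemma (the final lemma in the Supplementary Material) twice. Holding $A$ fixed and viewing $[H_\Lambda,A]\in\mc A^+_{\Lambda\cup S_1}$ while $B$ varies over $\mc A^+_{\{y\}\cup S_2}$ with norm~one, that lemma yields $\|(1-\Gamma^\Lambda_{y^c})[[H_\Lambda,A]]\|\leq\eps_{x,y}$, where the $S$-independence in item~3 of Lemma~\ref{lemma:restr_props} lets me omit the auxiliary-system labels. Since $A$ is supported inside $y^c$ in the combined system, item~1 of Lemma~\ref{lemma:restr_props} pulls $A$ through the reduction, giving $\|[(1-\Gamma^\Lambda_{y^c})[H_\Lambda],A]\|\leq\eps_{x,y}$. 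A second application with $A$ varying over $\mc A^+_{\{x\}\cup S_1}$ then yields $\|(1-\Gamma^\Lambda_{x^c})[(1-\Gamma^\Lambda_{y^c})[H_\Lambda]]\|\leq\eps_{x,y}$. Using canonicity ($\Gamma^\Lambda_X[H_\Lambda]=H_X$) and $\Gamma^\Lambda_{x^c}\circ\Gamma^\Lambda_{y^c}=\Gamma^\Lambda_{x^c\cap y^c}$, the left-hand side collapses to $\sum_{Z\ni x,y}\hat\Phi_\Lambda(Z)$. Contractivity of the reduction map (a unital, completely positive map has operator-norm one) then lets me restrict the sum to any fixed set $X\ni x,y$ with $|x-y|=\diam(X)$, and a reverse-triangle recursion on $|Z|$ — which terminates in at most $k$ steps because the potential is $k$-body — isolates $\hat\Phi_\Lambda(X)$ and produces $\|\hat\Phi_\Lambda(X)\|\leq m_k\,\partial_t h(0,\diam(X))$.

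The main obstacle I anticipate lies in the third stage: I must verify that, in the fermionic setting, $A$ can genuinely be pulled through $\Gamma^\Lambda_{y^c}$ from either side. For spins this is trivial tensor-factor commutation, but here it crucially uses that $A$ is \emph{even} so that $A$ commutes with $P_{y^c\cup S}^\pm$ and with the Haar-averaged even unitaries defining the reduction map; an odd $A$ would anticommute with odd parts and the identity would fail. This is precisely why the hypothesis must be formulated over physical observables on $\{x\}\cup S_1$ and $\{y\}\cup S_2$ rather than over arbitrary operators at the bare sites $x,y$, and the $S$-independence of $\Gamma^\Lambda_X$ on even operators is what makes the resulting bound clean and auxiliary-free.
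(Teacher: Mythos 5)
Your proposal is correct and follows the same route the paper takes: the paper's proof of Theorem~\ref{thm:converse_fermions} consists entirely of the remark that one repeats the spin-case argument with the fermionic reduction map of \eqref{eq:fermionic_restriction_map}, and your four stages faithfully unpack exactly that, invoking the fermionic analogues of Lemmas~\ref{lemma:canonical} and~\ref{lemma:reductionspins}, the properties from Lemma~\ref{lemma:restr_props}, and the recursion on $|Z|$. You also correctly identify the key point the paper leaves implicit, namely that evenness of $A$ and the $S$-independence of the reduction map are what let $A$ be pulled through $\Gamma^\Lambda_{y^c}$ and what necessitate allowing auxiliary fermionic systems $S_1,S_2$ in the hypothesis.
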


\section{Strictly local, discrete-time, unitary dynamics that cannot be generated locally in continuous time}
Here, we briefly describe an example of discrete-time, unitary dynamics that is strictly local, but cannot be generated in continuous time with a (quasi-)local Hamiltonian.
This example is formulated in first-quantized language, but may be second-quantized to yield a fermionic lattice model.
The model describes a particle hopping on a one-dimensional line, which we can view as $\mathbb Z$, with Hilbert-space is given by $l^2(\mathbb Z)$. 
An orthonormal basis is given by associating one basis-vector $\ket{x}$ with each lattice site $x\in \mathbb Z$, which is interpreted as the particle being located at the site $x$.
The unitary dynamics that we consider simply amounts to shifting the particle by one site:
\begin{align}
	U\ket{x} = \ket{x+1}.
\end{align}
In other words, $U = \exp(\mathrm i P)$, where $P$ is the lattice momentum, which generates translations.
It's clear that this discrete-time dynamics is strictly local.
However, the matrix-elements of the lattice-momentum in position space fulfill (for $x\neq y$)
\begin{align}
	\big|\! \langle y| P|x\rangle \big| = \frac{1}{|y-x|}.
\end{align}
Therefore, if we choose $A=\proj{x},B=\proj{y}$ for $x\neq y$, we find that for small times $t$:
\begin{align}
	\norm{\Big[\e^{\mathrm i P t}A \e^{-\mathrm i P t},B\Big]} = \frac{t}{|x-y|} + \mathcal O(t^2).
\end{align}
Thus, while the dynamics is strictly local for discrete times, it becomes long-ranged for intermediate times.

\end{document}